\newcommand{\Rmnum}[1]{\expandafter\@slowromancap\romannumeral #1@}
\newcommand\blfootnote[1]{%
  \begingroup
  \renewcommand\thefootnote{}\footnote{#1}%
  \addtocounter{footnote}{-1}%
  \endgroup
}
\newtheorem{proposition}{Proposition}
\begin{document}
\newcommand{\fs}{\hspace{0.07in}}
\newcommand{\bs}{\hspace{-0.1in}}
\newcommand{\re}{{\rm Re} \, }
\newcommand{\e}{{\rm E} \, }
\newcommand{\p}{{\rm P} \, }
\newcommand{\cn}{{\cal CN} \, }
\newcommand{\n}{{\cal N} \, }
\newcommand{\ba}{\begin{array}}
\newcommand{\ea}{\end{array}}
\newcommand{\be}{\begin{displaymath}}
\newcommand{\ee}{\end{displaymath}}
\newcommand{\ben}{\begin{equation}}
\newcommand{\een}{\end{equation}}
\newcommand{\bena}{\begin{eqnarray}}
\newcommand{\eena}{\end{eqnarray}}
\newcommand{\beqa}{\begin{eqnarray*}}
\newcommand{\enqa}{\end{eqnarray*}}
\newcommand{\f}{\frac}
\newcommand{\bc}{\begin{center}}
\newcommand{\ec}{\end{center}}
\newcommand{\bi}{\begin{itemize}}
\newcommand{\ei}{\end{itemize}}
\newcommand{\benu}{\begin{enumerate}}
\newcommand{\eenu}{\end{enumerate}}
\newcommand{\bdes}{\begin{description}}
\newcommand{\edes}{\end{description}}
\newcommand{\bt}{\begin{tabular}}
\newcommand{\et}{\end{tabular}}
\newcommand{\vs}{\vspace}
\newcommand{\hs}{\hspace}
\newcommand{\sort}{\rm sort \,}

\newcommand \thetabf{{\mbox{\boldmath$\theta$\unboldmath}}}
\newcommand{\Phibf}{\mbox{${\bf \Phi}$}}
\newcommand{\Psibf}{\mbox{${\bf \Psi}$}}
\newcommand \alphabf{\mbox{\boldmath$\alpha$\unboldmath}}
\newcommand \betabf{\mbox{\boldmath$\beta$\unboldmath}}
\newcommand \gammabf{\mbox{\boldmath$\gamma$\unboldmath}}
\newcommand \deltabf{\mbox{\boldmath$\delta$\unboldmath}}
\newcommand \epsilonbf{\mbox{\boldmath$\epsilon$\unboldmath}}
\newcommand \zetabf{\mbox{\boldmath$\zeta$\unboldmath}}
\newcommand \etabf{\mbox{\boldmath$\eta$\unboldmath}}
\newcommand \iotabf{\mbox{\boldmath$\iota$\unboldmath}}
\newcommand \kappabf{\mbox{\boldmath$\kappa$\unboldmath}}
\newcommand \lambdabf{\mbox{\boldmath$\lambda$\unboldmath}}
\newcommand \mubf{\mbox{\boldmath$\mu$\unboldmath}}
\newcommand \nubf{\mbox{\boldmath$\nu$\unboldmath}}
\newcommand \xibf{\mbox{\boldmath$\xi$\unboldmath}}
\newcommand \pibf{\mbox{\boldmath$\pi$\unboldmath}}
\newcommand \rhobf{\mbox{\boldmath$\rho$\unboldmath}}
\newcommand \sigmabf{\mbox{\boldmath$\sigma$\unboldmath}}
\newcommand \taubf{\mbox{\boldmath$\tau$\unboldmath}}
\newcommand \upsilonbf{\mbox{\boldmath$\upsilon$\unboldmath}}
\newcommand \phibf{\mbox{\boldmath$\phi$\unboldmath}}
\newcommand \varphibf{\mbox{\boldmath$\varphi$\unboldmath}}
\newcommand \chibf{\mbox{\boldmath$\chi$\unboldmath}}
\newcommand \psibf{\mbox{\boldmath$\psi$\unboldmath}}
\newcommand \omegabf{\mbox{\boldmath$\omega$\unboldmath}}
\newcommand \Sigmabf{\hbox{$\bf \Sigma$}}
\newcommand \Upsilonbf{\hbox{$\bf \Upsilon$}}
\newcommand \Omegabf{\hbox{$\bf \Omega$}}
\newcommand \Deltabf{\hbox{$\bf \Delta$}}
\newcommand \Gammabf{\hbox{$\bf \Gamma$}}
\newcommand \Thetabf{\hbox{$\bf \Theta$}}
\newcommand \Lambdabf{\hbox{$\bf \Lambda$}}
\newcommand \Xibf{\hbox{\bf$\Xi$}}
\newcommand \Pibf{\hbox{\bf$\Pi$}}
\newcommand \abf{{\bf a}}
\newcommand \bbf{{\bf b}}
\newcommand \cbf{{\bf c}}
\newcommand \dbf{{\bf d}}
\newcommand \ebf{{\bf e}}
\newcommand \fbf{{\bf f}}
\newcommand \gbf{{\bf g}}
\newcommand \hbf{{\bf h}}
\newcommand \ibf{{\bf i}}
\newcommand \jbf{{\bf j}}
\newcommand \kbf{{\bf k}}
\newcommand \lbf{{\bf l}}
\newcommand \mbf{{\bf m}}
\newcommand \nbf{{\bf n}}
\newcommand \obf{{\bf o}}
\newcommand \pbf{{\bf p}}
\newcommand \qbf{{\bf q}}
\newcommand \rbf{{\bf r}}
\newcommand \sbf{{\bf s}}
\newcommand \tbf{{\bf t}}
\newcommand \ubf{{\bf u}}
\newcommand \vbf{{\bf v}}
\newcommand \wbf{{\bf w}}
\newcommand \xbf{{\bf x}}
\newcommand \ybf{{\bf y}}
\newcommand \zbf{{\bf z}}
\newcommand \rbfa{{\bf r}}
\newcommand \xbfa{{\bf x}}
\newcommand \ybfa{{\bf y}}
\newcommand \Abf{{\bf A}}
\newcommand \Bbf{{\bf B}}
\newcommand \Cbf{{\bf C}}
\newcommand \Dbf{{\bf D}}
\newcommand \Ebf{{\bf E}}
\newcommand \Fbf{{\bf F}}
\newcommand \Gbf{{\bf G}}
\newcommand \Hbf{{\bf H}}
\newcommand \Ibf{{\bf I}}
\newcommand \Jbf{{\bf J}}
\newcommand \Kbf{{\bf K}}
\newcommand \Lbf{{\bf L}}
\newcommand \Mbf{{\bf M}}
\newcommand \Nbf{{\bf N}}
\newcommand \Obf{{\bf O}}
\newcommand \Pbf{{\bf P}}
\newcommand \Qbf{{\bf Q}}
\newcommand \Rbf{{\bf R}}
\newcommand \Sbf{{\bf S}}
\newcommand \Tbf{{\bf T}}
\newcommand \Ubf{{\bf U}}
\newcommand \Vbf{{\bf V}}
\newcommand \Wbf{{\bf W}}
\newcommand \Xbf{{\bf X}}
\newcommand \Ybf{{\bf Y}}
\newcommand \Zbf{{\bf Z}}
\newcommand \Omegabbf{{\bf \Omega}}
\newcommand \Rssbf{{\bf R_{ss}}}
\newcommand \Ryybf{{\bf R_{yy}}}
\newcommand \Cset{{\cal C}}
\newcommand \Rset{{\cal R}}
\newcommand \Zset{{\cal Z}}
\newcommand{\otheta}{\stackrel{\circ}{\theta}}
\newcommand{\defeq}{\stackrel{\bigtriangleup}{=}}
\newcommand{\oabf}{{\bf \breve{a}}}
\newcommand{\odbf}{{\bf \breve{d}}}
\newcommand{\oDbf}{{\bf \breve{D}}}
\newcommand{\oAbf}{{\bf \breve{A}}}
\renewcommand \vec{{\mbox{vec}}}
\newcommand{\Acalbf}{\bf {\cal A}}
\newcommand{\calZbf}{\mbox{\boldmath $\cal Z$}}
\newcommand{\feop}{\hfill \rule{2mm}{2mm} \\}
\newtheorem{theorem}{Theorem}[section]

\newcommand{\Rnum}{{\mathbb R}}
\newcommand{\Cnum}{{\mathbb C}}
\newcommand{\Znum}{{\mathbb Z}}
\newcommand{\Enum}{{\mathbb E}}

\newcommand{\Qcal}{{\cal Q}}
\newcommand{\Mcal}{{\cal M}}
\newcommand{\Ccal}{{\cal C}}
\newcommand{\Dcal}{{\cal D}}
\newcommand{\Hcal}{{\cal H}}
\newcommand{\Ocal}{{\cal O}}
\newcommand{\Rcal}{{\cal R}}
\newcommand{\Zcal}{{\cal Z}}
\newcommand{\Xcal}{{\cal X}}
\newcommand{\zzbf}{{\bf 0}}
\newcommand{\zebf}{{\bf 0}}

\newcommand{\eop}{\hfill $\Box$}

\newcommand{\gss}{\mathop{}\limits}
\newcommand{\gs}{\mathop{\gss_<^>}\limits}

\newcommand{\circlambda}{\mbox{$\Lambda$
             \kern-.85em\raise1.5ex
             \hbox{$\scriptstyle{\circ}$}}\,}

\newcommand{\tr}{\mathop{\rm tr}}
\newcommand{\var}{\mathop{\rm var}}
\newcommand{\cov}{\mathop{\rm cov}}
\newcommand{\diag}{\mathop{\rm diag}}
\def\rank{\mathop{\rm rank}\nolimits}
\newcommand{\ra}{\rightarrow}
\newcommand{\ul}{\underline}
\def\Pr{\mathop{\rm Pr}}
\def\Re{\mathop{\rm Re}}
\def\Im{\mathop{\rm Im}}

\def\submbox#1{_{\mbox{\footnotesize #1}}}
\def\supmbox#1{^{\mbox{\footnotesize #1}}}

%
\newtheorem{Theorem}{Theorem}[section]
\newtheorem{Definition}[Theorem]{Definition}
\newtheorem{Proposition}[Theorem]{Proposition}
\newtheorem{Lemma}[Theorem]{Lemma}
\newtheorem{Corollary}[Theorem]{Corollary}
%
%
\newcommand{\ThmRef}[1]{\ref{thm:#1}}
\newcommand{\ThmLabel}[1]{\label{thm:#1}}
\newcommand{\DefRef}[1]{\ref{def:#1}}
\newcommand{\DefLabel}[1]{\label{def:#1}}
\newcommand{\PropRef}[1]{\ref{prop:#1}}
\newcommand{\PropLabel}[1]{\label{prop:#1}}
\newcommand{\LemRef}[1]{\ref{lem:#1}}
\newcommand{\LemLabel}[1]{\label{lem:#1}}
%

\newcommand \bbs{{\boldsymbol b}}
\newcommand \cbs{{\boldsymbol c}}
\newcommand \dbs{{\boldsymbol d}}
\newcommand \ebs{{\boldsymbol e}}
\newcommand \fbs{{\boldsymbol f}}
\newcommand \gbs{{\boldsymbol g}}
\newcommand \hbs{{\boldsymbol h}}
\newcommand \ibs{{\boldsymbol i}}
\newcommand \jbs{{\boldsymbol j}}
\newcommand \kbs{{\boldsymbol k}}
\newcommand \lbs{{\boldsymbol l}}
\newcommand \mbs{{\boldsymbol m}}
\newcommand \nbs{{\boldsymbol n}}
\newcommand \obs{{\boldsymbol o}}
\newcommand \pbs{{\boldsymbol p}}
\newcommand \qbs{{\boldsymbol q}}
\newcommand \rbs{{\boldsymbol r}}
\newcommand \sbs{{\boldsymbol s}}
\newcommand \tbs{{\boldsymbol t}}
\newcommand \ubs{{\boldsymbol u}}
\newcommand \vbs{{\boldsymbol v}}
\newcommand \wbs{{\boldsymbol w}}
\newcommand \xbs{{\boldsymbol x}}
\newcommand \ybs{{\boldsymbol y}}
\newcommand \zbs{{\boldsymbol z}}

\newcommand \Bbs{{\boldsymbol B}}
\newcommand \Cbs{{\boldsymbol C}}
\newcommand \Dbs{{\boldsymbol D}}
\newcommand \Ebs{{\boldsymbol E}}
\newcommand \Fbs{{\boldsymbol F}}
\newcommand \Gbs{{\boldsymbol G}}
\newcommand \Hbs{{\boldsymbol H}}
\newcommand \Ibs{{\boldsymbol I}}
\newcommand \Jbs{{\boldsymbol J}}
\newcommand \Kbs{{\boldsymbol K}}
\newcommand \Lbs{{\boldsymbol L}}
\newcommand \Mbs{{\boldsymbol M}}
\newcommand \Nbs{{\boldsymbol N}}
\newcommand \Obs{{\boldsymbol O}}
\newcommand \Pbs{{\boldsymbol P}}
\newcommand \Qbs{{\boldsymbol Q}}
\newcommand \Rbs{{\boldsymbol R}}
\newcommand \Sbs{{\boldsymbol S}}
\newcommand \Tbs{{\boldsymbol T}}
\newcommand \Ubs{{\boldsymbol U}}
\newcommand \Vbs{{\boldsymbol V}}
\newcommand \Wbs{{\boldsymbol W}}
\newcommand \Xbs{{\boldsymbol X}}
\newcommand \Ybs{{\boldsymbol Y}}
\newcommand \Zbs{{\boldsymbol Z}}

\newcommand \Absolute[1]{\left\lvert #1 \right\rvert}

\title{RIS-Assisted Self-Interference Mitigation for In-Band Full-Duplex Transceivers}

\author{Wei Zhang, Yi Jiang,  \textit{Member, IEEE}, and Bin Zhou  
}


\maketitle
\blfootnote{

W. Zhang and Y. Jiang are with the Key Laboratory for Information
Science of Electromagnetic Waves (MoE), Department of Communication
Science and Engineering, School of Information Science and Technologies,
Fudan University, Shanghai 200433, China (emails: wzhang19@fudan.edu.cn, yijiang@fudan.edu.cn).

B. Zhou is with the Key Laboratory of Wireless Sensor Network and Communications, Chinese Academy of Sciences (CAS) (e-mail: bin.zhou@mail.sim.ac.cn).
}
\begin{abstract}
The wireless in-band full-duplex (IBFD) technology can in theory double the system capacity over the conventional frequency division duplex (FDD) or time-division duplex (TDD) alternatives. But the strong self-interference of the IBFD can cause excessive quantization noise in the analog-to-digital converters (ADC), which represents the hurdle for its real implementation. In this paper, we consider employing a reconfigurable intelligent surface (RIS) for IBFD communications. While the BS transmits and receives the signals to and from the users simultaneously on the same frequency band, it can adjust the reflection coefficients of the RIS to configure the wireless channel so that
the self-interference of the BS is sufficiently mitigated in the propagation domain. Taking the impact of the quantization noise into account, we propose to jointly design the downlink (DL) precoding matrix and the RIS coefficients to  maximize the sum of uplink (UL) and DL rates. The effectiveness of the proposed RIS-assisted in-band full-duplex (RAIBFD) system is verified by simulation studies, even taking into considerations that the phases of the RIS have only finite resolution. 

\end{abstract}
\begin{IEEEkeywords}
in-band full-duplex wireless, self-interference mitigation, reconfigurable intelligent surface, precoding.
\end{IEEEkeywords}

%
\IEEEpeerreviewmaketitle

\section{Introduction}
Owing to the ever-increasing demand on the capacity of next-generation wireless systems, researchers have kept questing techniques that can achieve higher spectral efficiency, among which the in-band full-duplex (IBFD) wireless have deservedly attracted considerable attentions \cite{6832464,7169508}. In contrast to the conventional time division duplex (TDD) and frequency division duplex (FDD) mode, an IBFD node can simultaneously transmit and receive signals (STAR) to double the system capacity, at least in theory. But for real implementation, the super strong self-interference due to the STAR \cite{6736751} can saturate the analog-digital converters (ADCs) of the adjacent receivers. The resultant excessively large quantization noise can void the reception of desired signal; thus, it is crucial to mitigate the self-interference at the receiver antennas before the ADCs \cite{5985554,6280258,7105651}.
The mitigation of strong self-interference has gained much attention in recent years.
The existing mitigation methods can be divided into two major categories based on whether they utilize the known waveform of self-interference or not.

Among the methods using the waveform of self-interference, researches have been devoted to cancel the self-interference using an auxiliary radio frequency (RF) chain \cite{5757799,1107.0607,6190376,6353396,6517516,6656015,9153162}. In general, this approach consists of two steps: first, both the channel responses of wireless self-interference channel and the auxiliary RF chain are estimated during the pilot training period; second, a reconstructed version of the self-interference can be created in the digital domain based on the estimated channels, and transmitted by the auxiliary RF chain to the receiver RF chains, which is subtracted from the received signal before the ADCs.
Another approach proposed was to utilize an analog canceller to transform the output signal of power amplifier (PA) into a copy of self-interference, and subtract it from the received signal before the ADCs at receiver \cite{10.1147,10.1145,7913706,6648617,7426862}.
The method proposed in \cite{10.1147} utilizes an analog transformer called balun to obtain an inverted copy of the transmit signal, of which the delay and attenuation is further adjusted by a circuit to match and cancel the self-interference at the receiver antenna. As a near-perfect match of the delay is extremely difficult \cite{10.1147}, another analog canceller is introduced in \cite{10.1145}, which uses the linear combination of several fixed delay versions of the copied self-interference to approximate the actual self-interference.  The paper \cite{7913706} uses the Taylor series to approximate the delayed self-interference with the original one and its derivative, which reduces the number of unknown parameters and facilitates the implementation of the circuit in analog domain. The paper \cite{6648617} applied the steepest-descent method to the cost function defined as the power of the analog canceller output, and obtained the weights of analog canceller until the cost function converges. Taking the nonlinearity of the PA, the non-ideal nature of the attenuator and phase shifter into account, the paper \cite{7426862} proposed an adaptive dithered linear search (DLS) method to minimize the receive signal strength indicator (RSSI) by optimizing the amplitudes and phases of different taps in the tapped delay line architecture.

The other category of researches  proposed to mitigate the strong interference without assuming that the waveform of interference is known. The main advantage of this category of works lies in the fact that the aforementioned works assuming known waveform of the interference are susceptible to the nonideality of the RF hardware -- imperfect reconstruction of  the self-interference in the analog domain can significantly degenerate the system performance.
The paper \cite{101145} proposed to place two transmit antennas at a distance $d$ and $d+\frac{\lambda}{2}$ away from the receive antenna for the signal destructive superposition. However, this method can only support a single stream data transmission. As an improvement to \cite{101145}, the work in \cite{10.1148} presented the first MIMO full duplex wireless system called MIDU. By utilizing the symmetric placement of transmit and receive antennas, the MIDU technology can both mitigate the self-interference and transmit multiple streams of data. Given the perfect channel state information (CSI), the paper \cite{SoftNull} proposed a so-called SoftNull method which utilizes a transmit beamforming matrix to mitigate the strong self-interference before it enters into the ADCs.
The paper \cite{9420257} proposed a technique named HIMAP to mitigate the strong interference using an analog phase shifter network (PSN), where both the waveform and the channel state information (CSI) are not needed. But the performance of HIMAP is sensitive to the phase deviations of the PSN, for which the authors also provided an over-the-air (OTA) calibration method to eliminate the effect of phase deviations in \cite{9810495}. It is worthwhile noting that the two categories of the interference mitigation methods are not exclusive to each other. They can potentially be combined to achieve even higher performance.

In recent years ,the advent of reconfigurable intelligent surface (RIS) technology offers a new option for self-interference mitigation (SIM), as it can reconfigure the wireless channels by dynamically adjusting the reflection elements  \cite{cuiRIS}\cite{8910627}.  In addition, the RIS as a passive device has benefit of low power consumption and introducing no circuit noise.
In this paper, we investigate a RIS-assisted in-band full-duplex (RAIBFD) system, where a  base-station (BS), assisted by a RIS, receives signal from the UL users and transmit signal to the DL users simultaneously.
As the self-interference reaches the receive antennas as a superposition of the direct path from the transmitter to the receiver and the reflection path from the RIS,
we can adjust the RIS so that the self-interference can add destructively at the receive antennas of the BS to avoids the saturation of ADCs. We aim at mitigating the strong self-interference by optimizing the transmit beamforming matrix of the BS and the reflection coefficients of the RIS, and further obtain the optimal power allocation for DL users.
The proposed RIS-assisted SIM method requires no analog cancellers, and has good compatibility with existing communication systems.
The simulation results verify the RAIBFD can achieve sum-rate $36\%$ higher the state-of-the-art methods SoftNull \cite{SoftNull} and $48\%$ over the FDD counterpart that is also assisted by a RIS \cite{RISFDD}. The proposed RAIBFD system is inclusive; it can potentially also employ the aforementioned methods, e.g., \cite{5757799,1107.0607,6190376,6353396,6517516,6656015,9153162}, to achieve even higher performance.

This paper is organized as follows. Section \ref{SEC2} introduces the RIS-assisted in-band full-duplex (RAIBFD) signal model of both UL and DL. In Section \ref{SEC3}, a joint design scheme of the RIS coefficients and the BS precoding matrix is proposed. Section \ref{SEC4} presents an idealistic full-duplex system with $\infty$-bit ADC, and simulation results are provided in Section \ref{SEC5}. Finally, conclusions are drawn in Section \ref{SEC6}.

\textit{Notations:} $(\cdot)^*$, $(\cdot)^T$, and $(\cdot)^H$ denotes the conjugate, the transpose, and conjugate transpose of a matrix, respectively; $||\cdot||_F$ represents a matrix's Frobenius norm; $|\cdot|$ denotes the determinant of a matrix;
$*$ and $\circ$ represents the Khatri–Rao product and Hadamard product, respectively; $\abf(i)$ denotes the $i$th element of vector $\abf$; ${\rm Re}(\cdot)$ represents the real part of a matrix; $\ebf_i$ is a vector with its $i$th element being $1$ and others being $0$;
$\text{diag}(\abf)$ stands for a diagonal matrix with vector $\abf$ being the diagonal element;
$\text{diag}(\Abf)$ stands for column vectors with its elements being those diagonal elements of matrix $\Abf$;
${\rm vec}(\Abf)$ stacks the column vector of  matrix $\Abf$ into a single column; $\angle{\Abf}$ stands for a matrix of which the entries are the phases of elements in $\Abf$;
${\bf 0}_{M\times N}$ represents a $M\times N$ matrix with all elements being $0$;
$\xbf\sim {\cal CN}(\mubf,\Qbf)$ is a complex Gaussian random vector with mean $\mubf$ and covariance matrix $\Qbf$.

\section{Signal Model} \label{SEC2}
\begin{figure}[htb]
\centering
{\psfig{figure=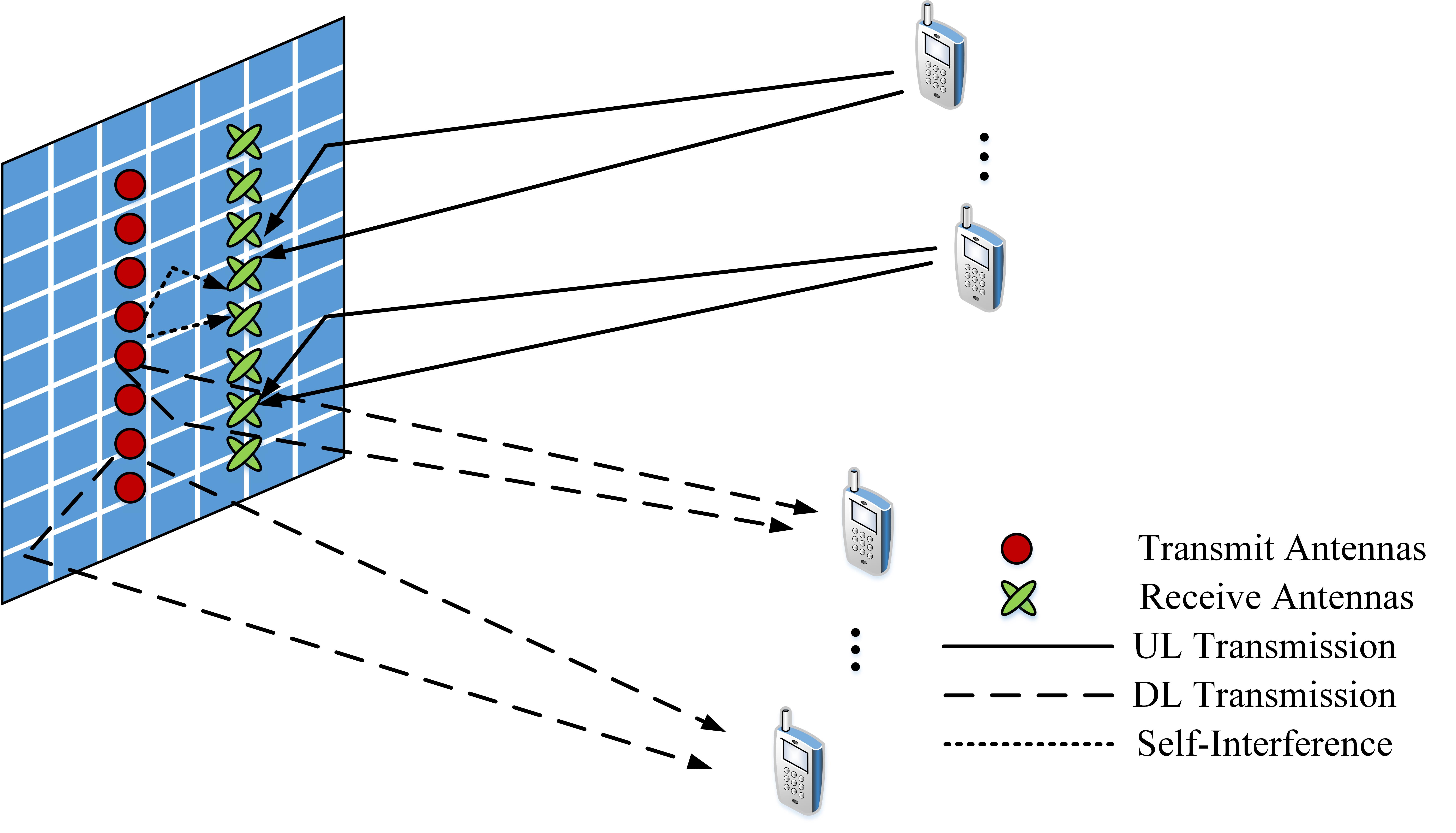,width= 3.5in}}
\caption{The framework of the RIS-based in-band full-duplex communication system.}
\label{fig.Framework}
\end{figure}
Consider a RAIBFD signal system depicted as Fig. \ref{fig.Framework} which consists of a base station (BS) equipped with $M_t$ transmitting (Tx) antennas and $M_r$ receiving (Rx) antennas, a RIS placed near to the BS with $M_{ris}$ elements, $K_u$ Tx users, and $K_d$ Rx users. Denote the channel from the Tx antennas to the Rx antennas of BS as  $\Hbf_{B_rB_t}\in{\mathbb C}^{M_r\times M_t}$, the channel from the Tx antennas of the BS to the RIS as $\Hbf_{RB_t}\in{\mathbb C}^{M_{ris}\times M_t}$, the channel from the RIS to the Rx antennas of the BS as  $\Hbf_{B_rR}\in{\mathbb C}^{M_r\times M_{ris}}$, the channel from $i$th UL user to the RIS as $\hbf_{Ru,i}\in{\mathbb C}^{M_{ris}\times 1}$, and the channel from the RIS to the $i$th DL user as $\hbf_{dR,i}^H\in{\mathbb C}^{1\times M_{ris}}$. Also denote the channel from the $i$th UL user to the BS and that from the BS to the $j$th DL user as $\hbf_{B_ru,i}\in{\mathbb C}^{M_r\times 1}$ and $\hbf_{dB_t,j}^H\in{\mathbb C}^{1\times M_{t}}$, respectively.
\subsection{Uplink Signal Model}
In the UL channel, the BS receives
\ben
\ybf_B = (\Hbf_{B_rR}\Dbf\Hbf_{Ru}+\Hbf_{B_ru})\Gammabf_u\sbf_u + \xibf_B +\zbf_B,
\label{equ.bsSig}
\een
where the first term is the signal of interest from the $K_u$ UL users. In particular,
\ben
\Dbf = {\rm diag}(e^{j\phi_1},e^{j\phi_2},\dots,e^{j\phi_{M_{ris}}}),
\een
\ben
\Hbf_{B_ru} = [\hbf_{B_ru,1},\hbf_{B_ru,2},\dots,\hbf_{B_ru,K_u}],
\een
\ben
\Gammabf_u = {\rm diag}\left(\sqrt{\beta_{u,1}},\sqrt{\beta_{u,2}},\dots,\sqrt{\beta_{u,K_u}}\right),
\een
and
\ben
\sbf_u = [s_{u,1},s_{u,2},\dots,s_{u,K_u}]^T.
\een
$\Dbf$ represents the phases of the RIS elements, and $\{\phi_i\}_{i=1}^{M_{ris}}\in\left[0,2\pi\right]$; $\{\beta_{u,k}\}_{k=1}^{K_u}$ represents the pathloss from $k$th UL user to the RIS and BS; $\{s_{u,k}\}_{k=1}^{K_u}$ are the transmitted symbols from the $k$th UL user to the BS, and $\Enum\{|s_{u,k}|^2\}=\sigma_{u}^2$; The second term of (\ref{equ.bsSig}) is the self-interference, i.e.,
\ben
\xibf_B = \Gbf_{\rm SI}\Pbf\sbf_d
\label{equ.xiB}
\een
with
\ben
\Gbf_{\rm SI} = \Hbf_{B_rR}\Dbf\Hbf_{RB_t}+\Hbf_{B_rB_t},
\label{equ.SIEff}
\een
\ben
\Pbf = [\pbf_1,\pbf_2,\dots,\pbf_{K_d}],
\een
\ben
\sbf_d = [s_{d,1},s_{d,2},\dots,s_{d,K_d}]^T.
\een
$\Gbf_{\rm SI}$ is the effective channel of self-interference; $\pbf_{k}\in{\mathbb C}^{M_{t}\times 1}$ is the precoding vector of $k$th DL user; $\{s_{d,k}\}_{k=1}^{K_d}$ are the transmitted symbols from the BS to $k$th DL user, and $\Enum\{|s_{d,k}|^2\}=\sigma_{d,k}^2$. The third term $\zbf_B\sim{\mathcal CN}({\bf 0}, \sigma_B^2\Ibf)$ is the Gaussian noise at BS.

Denoting the UL effective channel as
\ben
\Hbf_u = (\Hbf_{B_rR}\Dbf\Hbf_{Ru}+\Hbf_{B_ru})\Gammabf_u,
\label{equ.Hu}
\een
we have the output of the ADCs as \cite{7307134}
\ben
\begin{split}
\breve{\ybf}_B &= \alpha\ybf_{B} + \nbf_q, \\
&=\alpha\Hbf_u\sbf_u + \alpha\xibf_B +\alpha\zbf_B+\nbf_q,
\end{split}
\label{equ.SigMdQuantv1}
\een
where
\ben
\alpha = 1 - \rho,\ \rho = \frac{\pi\sqrt{3}}{2}2^{-2{\sf ENOB}}
\label{equ.ENOB}
\een
with ${\sf ENOB}$ being the effective number of bits of ADCs; $\nbf_q$ is additive Gaussian quantization noise vector and its covariance matrix is \cite{7307134}
\ben
\Rbf_{q} = \rho(1-\rho){\rm diag}(\Rbf_{\ybf_B}),
\label{equ.Rnq}
\een
where
\ben
\begin{split}
\Rbf_{\ybf_B} =&\Enum\{\ybf_B\ybf_B^H\}, \\
=&\sigma_{u}^2\Hbf_u\Hbf_u^H + \Gbf_{\rm SI}\Pbf\Rbf_{d}\Pbf^H\Gbf_{\rm SI}^H + \sigma_B^2\Ibf,
\end{split}
\label{equ.RyBv1}
\een
and
\ben
\Rbf_{d} = {\rm diag}\left(\sigma_{d,1}^2,\sigma_{d,2}^2,\dots,\sigma_{d,K_d}^2\right).
\label{equ.Rbfsd}
\een
Given fixed phases of the RIS elements, the self-interference $\xibf_B$ is known at the BS, and thus we can just subtract it from the received signal to obtain
\ben
\begin{split}
\tilde{\ybf}_B &= \breve{\ybf}_B - \alpha\xibf_B , \\
&=\alpha\Hbf_u\sbf_u +\alpha\zbf_B+\nbf_q.
\end{split}
\label{equ.SigMdQuantv2}
\een
We have from (\ref{equ.SigMdQuantv2}) that the UL spectral efficiency is
\ben
R_u = {\rm log}_2\left|\Ibf+\alpha^2\sigma_{s_u}^2\Hbf_u^H\Qbf_B^{-1}\Hbf_u\right|,
\label{equ.CapR1}
\een
where
\ben
\begin{split}
\Qbf_{B} &= {\mathbb E}\{(\alpha\zbf_B+\nbf_q)(\alpha\zbf_B+\nbf_q)^H\}, \\
&= \alpha^2\sigma_B^2\Ibf + \Rbf_{q}.
\end{split}
\label{equ.Qb}
\een
\subsection{Quantization Noise Caused by the Self-Interference}
The self-interference becomes a bottleneck for the UL signal transmission as the strong self-interference can saturate the ADCs and cause excessive quantization noise.
According to (\ref{equ.ENOB}) and (\ref{equ.Rnq}), the power of quantization noise is determined by  the {\sf ENOB} of ADCs and the covariance $\Rbf_{\ybf_B}$. It follows from (\ref{equ.RyBv1}) that
\ben
\Rbf_{\ybf_B} = \sigma^2_{u}\Hbf_u\Hbf_u^H  +\Hbf_{{\rm SI}}\Rbf_{d}\Hbf_{{\rm SI}}^H + \sigma_B^2\Ibf,
\label{equ.RyBv2}
\een
where $\Hbf_{{\rm SI}} = \Gbf_{\rm SI}\Pbf$.
Denoting
\ben
\Hbf_u = \left[\hbf_{{u},1},\hbf_{{u},2},\dots,\hbf_{{u},K_u}\right]
\een
and
\ben
\Hbf_{\rm SI} = \left[\hbf_{{\rm SI},1},\hbf_{{\rm SI},2},\dots,\hbf_{{\rm SI},K_d} \right],
\een
we have from (\ref{equ.RyBv2}) that
\ben
\Rbf_{\ybf_B} = \sigma_{u}^2\sum_{k=1}^{K_u}\hbf_{{u},k}\hbf_{{u},k}^H + \sum_{k=1}^{K_d}\sigma_{,k}^2\hbf_{{\rm SI},k}\hbf_{{\rm SI},k}^H+\sigma_B^2\Ibf,
\een
and we can obtain the $j$th diagonal entry of $\Rbf_{\ybf_B}$ as
\ben
\left[\Rbf_{\ybf_B}\right]_{j,j} = r_j  + q_j  + \sigma_B^2,
\een
where
\ben
r_j = \sum_{k=1}^{K_u}\sigma_{u}^2|\hbf_{u,k}(j)|^2
\een
\ben
q_j = \sum_{k=1}^{K_d}\sigma_{d,k}^2|\hbf_{{\rm SI},k}(j)|^2.
\label{equ.intf}
\een
Thus, at the output of the $j$th ADC, the quantization noise power is
\ben
\left[\Rbf_{q}\right]_{j,j} = \rho(1-\rho)(r_j+q_j+\sigma_B^2),
\een
and the signal to quantization noise ratio (SQNR) at $j$th antenna is
\ben
\begin{split}
{\sf SQNR}_j &= \frac{1}{\rho(1-\rho)}\frac{r_j}{r_j +q_j + \sigma_B^2}\\
&= \frac{1}{\rho(1-\rho)}\frac{1}{1+\frac{q_j + \sigma_B^2}{r_j}}.
\end{split}
\een
Denoting the signal to interference and noise ratio (SINR) at $j$th antenna as
\ben
{\sf SINR}_j = \frac{r_j}{q_j + \sigma_B^2},
\label{equ.SINRj}
\een
yields that
\ben
{\sf SQNR}_j =
 \frac{1}{\rho(1-\rho)}\frac{1}{1+\frac{1}{{\sf SINR}_j}}.
\een
Since ${\sf SINR}_j \ll 1$ for some strong self-interference,
we have that
\ben
{\sf SQNR}_j \approx
 \frac{1}{\rho(1-\rho)}{\sf SINR}_j,
\een
which is expressed in dB as
\ben
[{\sf SQNR}_j]_{\rm dB} \approx [{\sf SINR}_j]_{\rm dB}-10{\rm log}_{10}(\rho-\rho^2).
\label{equ.SQNRv1}
\een
Inserting $\rho = \frac{\pi\sqrt{3}}{2}\cdot2^{-2{\sf ENOB}}$ into (\ref{equ.SQNRv1}) and ignoring the higher-order small term $\rho^2$, we have
\ben
[{\sf SQNR}_j]_{\rm dB} \approx [{\sf SINR}_j]_{\rm dB}+6{\sf ENOB} -4.37.
\label{equ.sqnr}
\een

$[{\sf SINR}_j]_{\rm dB}$ can be very low owing to the strong self-interference. For instance, given $[{\sf SINR}_j]_{\rm dB} = -100$dB,  $[{\sf SQNR}]_{\rm dB} = -32$dB even for ADCs with ${\sf ENOB} = 12$bit and the UL spectral efficiency is essentially zero, despite the interference subtraction (\ref{equ.SigMdQuantv2}) in the later stage. Thus, it is crucial for the UL to mitigate the self-interference before the ADCs.

\subsection{Downlink Signal Model}
In the DL transmission, denoting the effective channel from $k$th DL user to the BS as
\ben
\gbf_{d,k} = \Hbf_{RB_t}^H\Dbf^H\hbf_{dR,k}+\hbf_{dB_t,k}
\label{equ.gdi}
\een
yields the received signal of $k$th DL user as
\ben
\begin{split}
y_{k} = \sqrt{\beta_{d,k}}\gbf_{d,k}^H\pbf_{k}s_{d,k}+\underbrace{\sqrt{\beta_{d,k}}\gbf_{d,k}^H\sum_{i\ne k}\pbf_is_{d,i} + z_{k}}_{\xi_{k}},
\label{equ.1Sig}
\end{split}
\een
$\beta_{d,k}$ is the pathloss from the BS and RIS to the $k$th DL user, and $z_{k}\sim{\cal CN}(0, \sigma^2)$ is the Gaussian noise. Then the spectral efficiency of $k$th DL user is
\ben
R_{d,k} = {\rm log}_2\left(1+\frac{\beta_{d,k}\sigma_{d,k}^2}{\sigma_{\xi_{k}}^2}|\pbf_{k}^H\gbf_{d,k}|^2\right),
\label{equ.Rd}
\een
for $k = 1,2,\dots,K_d$, where
\ben
\sigma_{\xi_{k}}^2 = \beta_{d,k}\gbf_{d,k}^H(\sum_{i\ne k}\sigma_{d,i}^2\pbf_{i}\pbf_{i}^H)\gbf_{d,k} + \sigma^2.
\een

\subsection{The Problem of UL-DL Sum-rate Maximization}
In this paper, we focus on maximizing the UL-DL sum-rate by jointly optimizing the downlink beamforming matrix $\Pbf$, the power allocation $\Rbf_d$, and the RIS phases $\Dbf$, i.e., to solve
\begin{align}
&\mathop{\max}_{\Pbf,\Rbf_d,\Dbf} R_u + \sum_{k = 1}^{K_d}R_{d,k} \nonumber\\
\ & \text{s.t.} \quad {\rm tr}\left(\Pbf\Rbf_{d}\Pbf^H\right) \le P_t,
\label{equ.objFuncOri}
\end{align}
where $R_u$ and $R_{d,k}$ are given in (\ref{equ.CapR1}) and (\ref{equ.Rd}), respectively, and $P_t$ is the maximum transmit power at BS. The problem is non-convex and the optimal solution appears intractable. We propose a SIM-based method for a near-optimal solution to (\ref{equ.objFuncOri}).

\section{Joint Design of the RIS Elements and the BS Precoding Matrices} \label{SEC3}
In this section, we propose to divide the DL precoding matrix into a precoding matrix that mitigates UL self-interference and a zero-forcing (ZF) precoding matrix that eliminates the DL inter-user interference. In the UL signal transmission, we mitigate the self-interference by jointly optimizing the RIS reflection coefficients and the self-interference precoding matrix; in the DL signal transmission we use the ZF precoding matrix and water-filling based power allocation.
\subsection{Self-Interference Mitigation}
According to (\ref{equ.intf}), we propose to mitigate the self-interference by making $q_j = 0, j = 1,2,\dots, M_r$, which is equivalent to making
\ben
\Gbf_{\rm SI}\Pbf = {\bf 0},
\label{costFuncv2}
\een
where $\Gbf_{\rm SI}$ is given in (\ref{equ.SIEff}).

We propose to divide the DL precoding matrix $\Pbf$ into
\ben
\Pbf = \Pbf_{\rm SIM}\Pbf_d,
\label{equ.Pdiv}
\een
where $\Pbf_{\rm SIM}\in{\Cnum^{M_t\times M_{d}}}$ is for SIM and is semi-unitary, and $\Pbf_d\in{\Cnum^{M_d\times K_{d}}}$ is for DL precoding. Here $M_d$ is a design parameter, which regulates the dimension of the subspace where the downlink transmitted signal can lie within. As larger $M_d$ leads to higher DL rate for more DL effective antenna and lower UL rate for less SIM capability and vice versa, the parameter $M_d$ should been chosen ($K_d \le M_d  \le  M_t$) to achieve a desirable trade-off between the UL rate and the DL rate of the RAIBFD systems, as will be shown in Section \ref{SEC5}.


We formulate the SIM problem as
\begin{align}
&\mathop{\min}_{\Pbf_{\rm SIM}, \Dbf} ||(\Hbf_{B_rR}\Dbf\Hbf_{RB_t}+\Hbf_{B_rB_t})\Pbf_{\rm SIM}||_F^2 \nonumber \\
& \text{s.t.} \qquad  \Pbf_{\rm SIM}^H\Pbf_{\rm SIM} = \Ibf,
\label{costFuncv4}
\end{align}
which can be solved via optimizing the SIM matrix $\Pbf_{\rm SIM}$ and the RIS phase matrix $\Dbf$ alternately.

\subsection{Optimization of $\Pbf_{\rm SIM}$}
Assuming the RIS reflection phase matrix is fixed, we need to solve
\begin{align}
&\mathop{\min}_{\Pbf_{\rm SIM}}\ ||\Gbf_{\rm SI}\Pbf_{\rm SIM}||_F^2,\nonumber \\
& \text{s.t.} \quad \Pbf_{\rm SIM}^H\Pbf_{\rm SIM} = \Ibf,
\label{costFuncv5}
\end{align}
whose solution is a matrix whose columns are $M_d$ eigenvectors corresponding to the $M_d$ smallest eigenvalues of $\Gbf_{\rm SI}^H\Gbf_{\rm SI}$.
\subsection{Optimization of $\Dbf$}
Given that $\Pbf_{\rm SIM}$ is known, we can reformulate (\ref{costFuncv4}) into
\begin{align}
&\mathop{\min}_{\Dbf}\ ||\Hbf_{B_rR}\Dbf\Abf+\Bbf||_F^2,\nonumber \\
& \text{s.t.} \quad
\Dbf = {\rm diag}\left(e^{j\phi_1},e^{j\phi_2},\dots,e^{j\phi_{M_{ris}}}\right),
\label{costFuncv8}
\end{align}
where $\Abf = \Hbf_{RB_t}\Pbf_{\rm SIM}$ and $\Bbf = \Hbf_{B_rB_t}\Pbf_{\rm SIM}$. Using the formula ${\rm vec}(\Xbf\Ybf\Zbf) = (\Zbf^T*\Xbf)\ybf$ where $\Ybf = {\rm diag}(\ybf)$ is a diagonal matrix, we have from (\ref{costFuncv8}) that
\begin{align}
&\mathop{\min}_{\dbf}\ ||\Cbf\dbf+\bbf||_2^2,\nonumber \\
& \text{s.t.} \quad
\dbf = \left[e^{j\phi_1},e^{j\phi_2},\dots,e^{j\phi_{M_{ris}}}\right]^T,
\label{costFuncv9}
\end{align}
where
\ben
\Cbf = \Abf^T*\Hbf_{B_rR}, \ \bbf = {\rm vec}(\Bbf).
\label{equ.Cadb}
\een
Here, we propose a manifold optimization to solve (\ref{costFuncv9}) as follows.

According to the theory of manifold optimization \cite{boumal2022intromanifolds}, the constant modulus constraint of (\ref{costFuncv9}) is a manifold defined as $
\Mcal_{cc}^{M_{ris}} = \left\{\dbf\in{\Cnum}^{M_{ris}}: |\dbf(1)| = |\dbf(2)| = \cdots = |\dbf({M_{ris}})| = 1\right\}$, where $\Mcal_{cc} = \left\{x\in{\Cnum}:|x| = 1\right\}$ is a complex circle. The search space of optimization problem (\ref{costFuncv9}) consists of $M_{ris}$ complex circles, which forms a Riemannian submanifold. Thus, we can obtain a near-opimal solution to  (\ref{costFuncv9}) with a Riemannian conjugate gradient (RCG) descent algorithm over the Riemannian submanifold.
Defining
\ben
f(\dbf) \triangleq ||\Cbf\dbf+\bbf||_2^2,
\label{equ.fd}
\een
we have the Euclidean gradient $\nabla f(\dbf)$  as
\ben
\nabla f(\dbf) = \Cbf^H(\Cbf\dbf+\bbf).
\label{equ.EucGrad}
\een
The tangent space at point $\dbf$ is
\ben
T_{\dbf}\Mcal_{cc}^{M_{ris}} = \left\{\xbf\in\Cnum^{M_{ris}}: {\rm Re}\{\xbf \circ \dbf^*\} = {\bf 0}_{M_{ris}} \right\},
\een
and the Riemannian gradient at $\dbf$ is the orthogonal projection of the Euclidean gradient $\nabla f(\dbf)$ onto the tangent space $T_{\dbf}\Mcal_{cc}^{M_{ris}}$, which is
\ben
\begin{split}
\gbf &= {\rm grad}(\dbf), \\
&= \nabla f(\dbf) -{\rm Re}\{\nabla f(\dbf)\circ \dbf^*\}\circ \dbf.
\end{split}
\label{equ.RiemannGrad}
\een
Thus, given a point $\dbf_{i}, i \ge 0$ with Riemannian gradient $\gbf_{i} = {\rm grad}(\dbf_{i})$, we can obtain a conjugate direction as
\ben
\cbf_{i} = \left\{
\ba{ll} -\gbf_{0}, & i = 0. \\
-\gbf_{i} + \beta_{i}\cbf_{i-1}^{+}, & i \ge 1.
\ea \right.
\label{equ.ci}
\een
For $i \ge 1$, $\beta_{i}$ is Fletcher-Reeves parameter defined as
\ben
\beta_{i}\triangleq \frac{||\gbf_{i}||_2^2}{||\gbf^+_{i-1}||_2^2},i\ge 1.
\label{equ.PR}
\een
$\cbf_{i-1}^{+}$ and $\gbf_{i-1}^{+}$ are the transport of tangent vector $\cbf_{i-1}$ and $\gbf_{i-1}$ from $\dbf_{i-1}$ to $\dbf_{i}$, which are specified as
\ben
\begin{split}
\cbf_{i-1}^{+} &= \cbf_{i-1} - {\rm Re}\{\cbf_{i-1}\circ \dbf_{i}^*\}\circ\dbf_{i}, i\ge 1,\\
\gbf_{i-1}^{+} &= \gbf_{i-1} - {\rm Re}\{\gbf_{i-1}\circ \dbf_{i}^*\}\circ\dbf_{i}, i\ge 1.
\end{split}
\label{equ.trasp}
\een

Given the conjugate direction $\cbf_i$ from (\ref{equ.ci}), we can obtain
\ben
\tbf_{i} = \dbf_{i}+\alpha_{i}\cbf_{i}, i \ge 0
\label{equ.t}
\een
in the tangent space of point $\dbf_{i}$, where $\alpha_{i}$ is step size obtained by the Armijo{-}Goldstein condition. But we need to use retraction to map $\tbf_{i}$ onto the manifold,  which is specified as
\ben
\begin{split}
{\rm Retr}_{\dbf_{i}}:& T_{\dbf_{i}}{\Mcal_{cc}^{M_{ris}}} \rightarrow {\Mcal_{cc}^{M_{ris}}}: \\
&\alpha_{i}\cbf_{i} \rightarrow {\rm Retr}_{\dbf_{i}}(\alpha_{i}\cbf_{i}), i\ge 0,
\end{split}
\een
where
\ben
{\rm Retr}_{\dbf_{i}}(\alpha_{i}\cbf_{i}) =
\left[\frac{\tbf_{i}(1)}{|\tbf_{i}(1)|},\frac{\tbf_{i}(2)}{|\tbf_{i}(2)|},\dots,\frac{\tbf_{i}(M_{ris})}{|\tbf_{i}(M_{ris})|}\right]^T.
\label{equ.retraction}
\een
We can obtain $\dbf_{i+1}$ from $\dbf_{i}$ as
\ben
\dbf_{i+1} = {\rm Retr}_{\dbf_{i}}(\alpha_{i}\cbf_{i}).
\label{equ.xi}
\een
Iterating $i$ from $0$ to $\infty$ until (\ref{equ.fd}) convergence, we can obtain a near-optimal solution $\dbf$ over the Riemannian manifold.
The Riemannian conjugate gradient (RCG) algorithm is summarized in Algorithm \ref{Algo.1}.
\begin{algorithm}[ht]
\caption{RCG Algorithm for RIS Coefficients Optimization}
\label{Algo.1}
\begin{algorithmic}[1]
\REQUIRE  $\Cbf$, $\bbf$, and $\Dbf_0$ with randomly initialization;
\ENSURE The RIS matrix $\Dbf$;
\STATE $\dbf_{0} = \diag(\Dbf_0)$, $\cbf_0 = -{\rm grad}(\dbf_0)$, $i = 0$;
\WHILE {the cost function in (\ref{costFuncv9}) still decreases}
\STATE Obtain the step size $\alpha_{i}$ in (\ref{equ.t}) with the Armijo{-}Goldstein condition.
\STATE Obtain $\tbf_{i}$ from (\ref{equ.t}), use retraction in (\ref{equ.retraction}) to obtain next point $\dbf_{i+1}$.
\STATE Calculate the Riemannian gradient $\gbf_{i+1}={\rm grad}f(\dbf_{i+1})$ from (\ref{equ.EucGrad}) and (\ref{equ.RiemannGrad}).
\STATE Obtain the transports vector $\gbf_i^{+}$ and $\cbf_i^{+}$ of Riemannian gradient $\gbf_{i}$ and the conjugate direction $\cbf_{i}$ from $\dbf_{i}$ to $\dbf_{i+1}$ from (\ref{equ.trasp}).
\STATE Calculate Fletcher-Reeves parameter $\beta_{i+1}$ from (\ref{equ.PR}).
\STATE Obtain the conjugate direction $\cbf_{i+1}$ from (\ref{equ.ci}).
\STATE $i = i+1$.
\ENDWHILE
\end{algorithmic}
\end{algorithm}

Based on the alternating optimization (AO) minimization of cost function (\ref{costFuncv4}) between $\Pbf_{\rm SIM}$ and $\Dbf$, we obtain the algorithm as summarized in Algorithm \ref{Algo.2}, where ${\Qcal}(\cdot)$ in line $9$ can quantize the vector element-wise to the grids $\{0,\frac{2\pi}{2^b},\frac{4\pi}{2^b},\dots,\frac{(2^b-1)2\pi}{2^b}\}$ for $b$-bit RIS coefficients.

\begin{algorithm}[ht]
\caption{AO Minimization Algorithm for SIM}
\label{Algo.2}
\begin{algorithmic}[1]
\REQUIRE channel matrices $\Hbf_{B_rR}$, $\Hbf_{RB_t}$, and $\Hbf_{B_rB_t}$;
\ENSURE The SIM matrix $\Pbf_{\rm SIM}$ and the RIS matrix $\Dbf$;
\STATE Randomly initialize the RIS response matrix $\Dbf$.
\WHILE{the cost function in (\ref{costFuncv4}) larger than $10^{-10}$}
\STATE Obtain the effective channel $\Gbf_{\rm SI}$, update $\Pbf_{\rm SIM}$ with $M_d$ eigenvectors corresponding to the $M_d$ smallest eigenvalues of $\Gbf_{\rm SI}^H\Gbf_{\rm SI}$.
\STATE Fix $\Pbf_{\rm SIM}$, and calculate $\Cbf$, $\bbf$ from (\ref{equ.Cadb}).
\STATE Use Algorithm \ref{Algo.1} to obtain $\dbf$.
\IF{$b =\infty$}
\STATE$\Dbf = \diag(\dbf)$.
\ELSE
\STATE $\dbf_q = e^{j{{\Qcal}(\angle{\dbf})}}$.
\IF{$||\Cbf\dbf_q+\bbf||_2^2<||\Cbf\dbf+\bbf||_2^2$}
\STATE $\Dbf = \diag(\dbf_q)$.
\ENDIF
\ENDIF
\ENDWHILE
\end{algorithmic}
\end{algorithm}

\subsection{The Downlink Precoding}
After $\Pbf_{\rm SIM}$ and $\Dbf$ is obtained to mitigate the self-interference,  the quantization noise caused by the self-interference can be ignored. Thus, it follows from (\ref{equ.CapR1}) that the UL spectral efficiency is
\ben
R_u = {\rm log}_2\left|\Ibf+\frac{\sigma_{s_u}^2}{\sigma_B^2}\Hbf_u\Hbf_u^H\right|,
\een
which is independent of the DL precoding matrix $\Pbf_d$.  Given $\Pbf_{\rm SIM}$ and $\Dbf$, to maximize (\ref{equ.objFuncOri}) is equivalent to solving
\begin{align}
\mathop{\max}_{\Pbf_d,\Rbf_{d}} \ &\sum_{k = 1}^{K_d}R_{d,k}, \nonumber\\
\ \text{s.t.} \quad &{\rm tr}\left(\Pbf_d\Rbf_{d}\Pbf_d^H\right) \le P_t,
\label{equ.objFuncOriv2}
\end{align}
for which we consider a simple yet near-optimal ZF precoder as detailed in the following.

Denoting $\Hbf_d = \Gammabf_d(\Hbf_{dR}\Dbf\Hbf_{RB_t}+\Hbf_{dB_t})$,
where
\ben
\Hbf_{dR} = \left[\hbf_{dR,1},\hbf_{dR,2},\dots,\hbf_{dR,K_d}\right]^H,
\een
\ben
\Hbf_{dB_t} = \left[\hbf_{dB_t,1},\hbf_{dB_t,2},\dots,\hbf_{dB_t,K_d} \right]^H,
\een
and
\ben
\Gammabf_d = {\rm diag}\left(\sqrt{\beta_{d,1}},\sqrt{\beta_{d,2}},\dots,\sqrt{\beta_{d,K_d}}\right),
\een
we rewrite (\ref{equ.1Sig}) into the matrix form as
\ben
\ybf = \Hbf_d\Pbf_{\rm SIM}\Pbf_{d}\sbf_d + \zbf,
\label{equ.1Sigv2}
\een
where
\ben
\ybf = [y_1,y_2,\dots,y_{K_d}]^T,\quad \zbf = \left[z_1,z_2,\dots,z_{K_d}\right]^T.
\een
Denoting $\Hbf_{\rm SIM} \triangleq \Hbf_d\Pbf_{\rm SIM}$, we apply the ZF precoder
\ben
\Pbf_d = \Hbf_{\rm SIM}^H(\Hbf_{\rm SIM}\Hbf_{\rm SIM}^H)^{-1},
\label{equ.zfprecoding}
\een
to (\ref{equ.1Sigv2}), and we have
\ben
y_{k} = s_{d,k} + z_{k},k = 1,2,\dots, K_d,
\een
which leads (\ref{equ.objFuncOriv2}) to be
\begin{align}
&\mathop{\max}_{\Pbf_d,\Rbf_{d}} \ \sum_{k = 1}^{K_d} {\rm log}_2\left(1+\frac{\sigma_{d,k}^2}{\sigma^2}\right), \nonumber\\
& \text{s.t.} \quad {\rm tr}\left((\Hbf_{\rm SIM}\Hbf_{\rm SIM}^H)^{-1}\Rbf_{d}\right) \le P_t.
\label{equ.objFuncOriv3}
\end{align}
The solution to (\ref{equ.objFuncOriv3}) is the classic ``water filling'' power allocation.

\section{A Benchmark Study: An Idealistic Full-Duplex System with $\infty$-bit ADC} \label{SEC4}
This section investigate a RIS-assisted full-duplex system where ADCs are assumed to have $\infty$-bit resolution. By maximizing the UL-DL sum-rate of this idealistic system, we obtain a performance upper bound, which can be used to gauge the realistic system developed in Section \ref{SEC3}.

As the ${\sf ENOB} \rightarrow \infty$, we have $\rho = 0$ according to (\ref{equ.ENOB}), and then the quantization noise can be ignored in (\ref{equ.SigMdQuantv2}). Hence,
\ben
\tilde{\ybf}_B = \Hbf_u\sbf_u + \zbf_B.
\een
The UL spectral efficiency is
\ben
R_u = {\rm log}_2\left|\Ibf+\frac{\sigma_{s_u}^2}{\sigma_B^2}\Hbf_u\Hbf_u^H\right|.
\label{equ.FduLSE}
\een
For the DL signal transmission, we have from (\ref{equ.1Sigv2}) that
\ben
\ybf = \Hbf_d\Pbf\sbf_d + \zbf.
\een

Adopting the ZF precoding matrix
\ben
\Pbf = \Hbf_d^H\left(\Hbf_d\Hbf_d^H\right)^{-1},
\label{equ.Pzf}
\een
we obtain the DL spectral efficiency as
\ben
R_d = \sum_{k=1}^{K_d} {\rm log}_2\left(1+\frac{\sigma_{d,k}^2}{\sigma^2}\right).
\label{equ.HdDL}
\een
Hence  (\ref{equ.objFuncOri}) can be simplified to be
\begin{align}
&\mathop{\max}_{\{\phi_{i}\}_{i=1}^{M_{ris}},\{\sigma_{d,k}^2\}_{k=1}^{K_d}} \ R_u + R_d, \nonumber\\
& \text{s.t.} \qquad \sum_{k=1}^{K_d}\gamma_{k} \sigma_{d,k}^2\le P_t,
\label{equ.HdobjFunc}
\end{align}
where $\gamma_{k}$ is
\ben
\gamma_{k} = \left[(\Hbf_d\Hbf_d^H)^{-1}\right]_{k,k}= \tr\left(\Ebf_{k}(\Hbf_d\Hbf_d^H)^{-1}\right),
\label{equ.gammai}
\een
and $\Ebf_{k} = \ebf_{k}\ebf_{k}^T$.
When fixing the RIS phases $\{\phi_i\}_{i=1}^{M_{ris}}$, we can reduce (\ref{equ.HdobjFunc}) to
\begin{align}
&\mathop{\max}_{\{\sigma_{d,k}^2\}_{k=1}^{K_d}} \ R_d, \nonumber\\
& \text{s.t.} \quad \sum_{k=1}^{K_d}\gamma_{k} \sigma_{d,k}^2\le P_t,
\label{equ.HdobjFuncv2}
\end{align}
to which the solution is the ``water filling'' power allocation
\ben
\sigma_{d,k}^2 = {\rm max}\left(0,\frac{1}{\mu\gamma_{k}}-\sigma^2\right),k=1,2,\dots,K_d.
\label{equ.sigmad}
\een
The Lagrangian multiplier $\mu$ can be obtained from the power constraint $\sum_{k=1}^{K_d}\gamma_k \sigma_{d,k}^2 = P_t$. We assume the user selection has been done so that all the $K_d$ users are allocated with non-zero power. That is,
\ben
\mu = \frac{K_d}{\sum_{k=1}^{K_d}\gamma_{k}\sigma^2+P_t}.
\label{equ.mu}
\een
Inserting (\ref{equ.mu}) into (\ref{equ.sigmad}) yields the optimal power allocation
\ben
\sigma_{d,k}^2 = \frac{P_t+\sum_{k=1}^{K_d}\gamma_{k}\sigma^2}{K_d\gamma_{k}} - \sigma^2, k = 1,2,\dots,K_d,
\label{equ.HdWaterFillingSolu}
\een
which further leads (\ref{equ.HdDL}) to be
\ben
R_d = K_d{\rm log}_2\left(\frac{P_t}{\sigma^2}+\sum_{k=1}^{K_d}\gamma_{k}\right)
-\sum_{k=1}^{K_d}{\rm log}_2(K_d\gamma_{k}).
\label{equ.HdDLv2}
\een

After substituting  (\ref{equ.Pzf}) and  (\ref{equ.HdWaterFillingSolu}) are inserted into (\ref{equ.objFuncOri}), we recast (\ref{equ.objFuncOri}) as a function of $\dbf$:
\ben
g(\dbf) \triangleq - R_u(\dbf) - R_d(\dbf).
\een
Thus, to solve  (\ref{equ.objFuncOri}) is equivalent to
\begin{align}
\mathop{\min}_{\dbf} \ g(\dbf),
\label{equ.HdobjFuncv4}
\end{align}
of which a near-optimal solution can be obtained by the RCG algorithm in Algorithm \ref{Algo.1} with cost function $g(\dbf)$ instead of $f(\dbf)$. The Euclidean gradient of $g(\dbf)$ is given in Proposition \ref{prop.1}.

\begin{proposition}\label{prop.1}
For the cost function $g(\dbf)$, its Euclidean gradient is
\ben
\begin{split}
\nabla g(\dbf) =& -\frac{1}{{\rm ln}2}\frac{\sigma_{u}^2}{\sigma_B^2}\diag\left(\Jbf_u^H\right)
 +\frac{K_d}{{\rm ln}2}\frac{\diag\left(\Fbf_d^H\right)}{\frac{P_t}{\sigma^2}+\sum_{k=1}^{K_d}\gamma_{k}} \\ &- \frac{1}{{\rm ln}2}\sum_{k=1}^{K_d}\frac{\diag\left(\Jbf_{d,k}^H\right)}{\gamma_k},
\end{split}
\label{equ.propgderd}
\een
where
\ben
\Jbf_u = \Hbf_{Ru}\Gammabf_u\Hbf_u^H(\Ibf+\frac{\sigma_{u}^2}{\sigma_B^2}\Hbf_u\Hbf_u^H)^{-1}\Hbf_{B_rR}.
\een
\ben
\Fbf_d = \Hbf_{RB_t}\Hbf_d^H(\Hbf_d\Hbf_d^H)^{-1}(\Hbf_d\Hbf_d^H)^{-1}\Gammabf_d{\Hbf}_{dR},
\een
and
\ben
\Jbf_{d,k} = \Hbf_{RB_t}\Hbf_d^H(\Hbf_d\Hbf_d^H)^{-1}\Ebf_k(\Hbf_d\Hbf_d^H)^{-1}\Gammabf_d{\Hbf}_{dR}.
\een
\end{proposition}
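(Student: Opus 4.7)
\textbf{Proof plan for Proposition \ref{prop.1}.}  Since $g(\dbf)=-R_u(\dbf)-R_d(\dbf)$, the plan is to compute $\nabla R_u$ and $\nabla R_d$ separately via Wirtinger calculus. Following the convention already used in (\ref{equ.EucGrad}), I take the Euclidean gradient to satisfy $df=2\,\Re\{(\nabla f)^H d\dbf\}$, equivalently $\nabla f=\partial f/\partial\dbf^{*}$. A single key identity will be invoked repeatedly: for any matrix $\Abf$ of compatible size,
\[
\tr\!\big(\Abf\,\diag(d\dbf)\big)=\diag(\Abf)^{T} d\dbf,
\]
and $\diag(\Abf)^{T}=(\diag(\Abf^{H}))^{H}$, so that an expression of the form $2\alpha\,\Re\{\diag(\Abf)^{T}d\dbf\}$ contributes $\alpha\,\diag(\Abf^{H})$ to the Euclidean gradient.

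\emph{UL term.}  First I would differentiate $R_u=\log_2|X|$ with $X=\Ibf+\tfrac{\sigma_u^2}{\sigma_B^2}\Hbf_u\Hbf_u^H$ using $d\ln|X|=\tr(X^{-1}dX)$. Writing $dX=\tfrac{\sigma_u^2}{\sigma_B^2}[(d\Hbf_u)\Hbf_u^H+\Hbf_u(d\Hbf_u)^H]$, and inserting $d\Hbf_u=\Hbf_{B_rR}(d\Dbf)\Hbf_{Ru}\Gammabf_u$ (since only $\Dbf$ depends on $\dbf$ and $d\Dbf=\diag(d\dbf)$), cyclic invariance of the trace gives
\[
dR_u=\tfrac{2}{\ln 2}\tfrac{\sigma_u^2}{\sigma_B^2}\Re\!\big\{\tr\!\big(\Hbf_{Ru}\Gammabf_u\Hbf_u^H X^{-1}\Hbf_{B_rR}\diag(d\dbf)\big)\big\}.
\]
Applying the identity above with $\Abf=\Jbf_u$ and matching the result against $2\Re\{(\nabla R_u)^H d\dbf\}$ yields $\nabla R_u=\tfrac{1}{\ln 2}\tfrac{\sigma_u^2}{\sigma_B^2}\diag(\Jbf_u^H)$, whose negative is the first term of the proposition.

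\emph{DL term.}  After water-filling, $R_d$ depends on $\dbf$ only through $\gamma_k=\ebf_k^{T}(\Hbf_d\Hbf_d^H)^{-1}\ebf_k$, so by (\ref{equ.HdDLv2}) I have $\nabla R_d=\tfrac{K_d}{\ln 2}\,\tfrac{\sum_k\nabla\gamma_k}{P_t/\sigma^2+\sum_k\gamma_k}-\tfrac{1}{\ln 2}\sum_k\nabla\gamma_k/\gamma_k$. For each $\gamma_k$ I would apply $d(\Hbf_d\Hbf_d^H)^{-1}=-(\Hbf_d\Hbf_d^H)^{-1}[d(\Hbf_d\Hbf_d^H)](\Hbf_d\Hbf_d^H)^{-1}$ together with $d\Hbf_d=\Gammabf_d\Hbf_{dR}(d\Dbf)\Hbf_{RB_t}$. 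With $\Ebf_k=\ebf_k\ebf_k^{T}$ and the same trace manipulation as above, the cross terms from $dX^{-1}$ combine (they are complex conjugates under the $\Re$) to give $d\gamma_k=-2\Re\{\diag(\Jbf_{d,k})^{T}d\dbf\}$, so $\nabla\gamma_k=-\diag(\Jbf_{d,k}^H)$. Summing over $k$ and using $\sum_{k=1}^{K_d}\Ebf_k=\Ibf_{K_d}$ collapses the inner sandwich and produces $\sum_k\Jbf_{d,k}=\Fbf_d$, hence $\sum_k\nabla\gamma_k=-\diag(\Fbf_d^H)$. Assembling $-\nabla R_d$ yields the remaining two terms of (\ref{equ.propgderd}) with exactly the signs stated.

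\emph{Main obstacle.}  The calculations themselves are routine matrix differentials; the real hazard is the bookkeeping that connects the $\Re\{\cdot\}$-form differential to the Wirtinger convention adopted by the RCG algorithm, specifically the sign and conjugation tracking that turns $\diag(\Jbf_u)$, $\diag(\Fbf_d)$, $\diag(\Jbf_{d,k})$ into their Hermitian-conjugate counterparts via $\diag(\Abf)^{T}=(\diag(\Abf^{H}))^{H}$. Once this normalization is fixed consistently, together with the minus sign coming from the matrix-inverse differential and the $1/\ln 2$ factor from $\log_2$, the three additive contributions appear in exactly the form claimed.
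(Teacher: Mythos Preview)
Your proposal is correct and follows essentially the same approach as the paper's appendix: both split $g=-R_u-R_d$, apply $d\ln|X|=\tr(X^{-1}dX)$ for the UL part and $d(X^{-1})=-X^{-1}(dX)X^{-1}$ for the DL part, substitute $d\Hbf_u=\Hbf_{B_rR}(d\Dbf)\Hbf_{Ru}\Gammabf_u$ and $d\Hbf_d=\Gammabf_d\Hbf_{dR}(d\Dbf)\Hbf_{RB_t}$, and read off the gradient from the $\diag(d\dbf)$ coefficients. The only cosmetic difference is that you route the DL computation through $\nabla\gamma_k$ and then invoke $\sum_k\Ebf_k=\Ibf_{K_d}$ to collapse $\sum_k\Jbf_{d,k}$ into $\Fbf_d$, whereas the paper differentiates $R_d$ in one shot and obtains $\Fbf_d$ and $\Jbf_{d,k}$ as separate by-products; the underlying algebra is identical.
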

\begin{proof}
The derivation of (\ref{equ.propgderd}) is detailed in Appendix.
\end{proof}

\section{Numerical Examples} \label{SEC5}
This section provides simulation results to validate the performance of the proposed RAIBFD wireless systems.
Consider a base station employed with the eight-element receive antenna array and eight-element transmit antenna array, i.e., $M_r = M_t = 8$. A RIS is placed behind the antenna arrays of the BS as illustrated in Fig. \ref{fig.BSstructure}, where (a) and (b) show the uniform linear array (ULA) and uniform rectangular array (URA), respectively.
\begin{figure}[htb]
  \begin{minipage}[htb]{0.5\linewidth}
    \centering
    \subfigure[ULA scenario]{
    \includegraphics[scale=0.25]{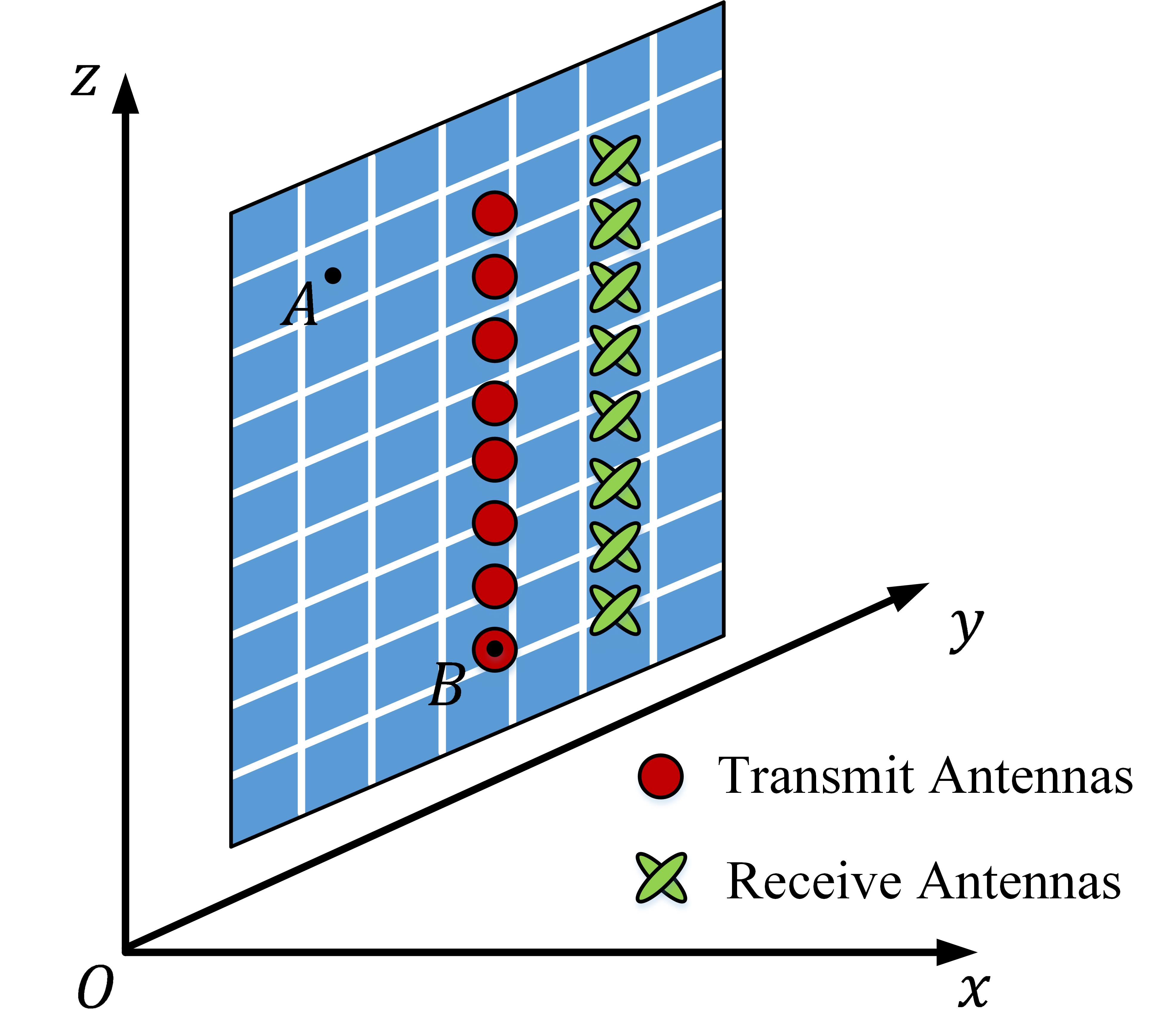}}
    \label{fig:side:a}
  \end{minipage}%
  \begin{minipage}[htb]{0.5\linewidth}
    \centering
    \subfigure[URA scenario]{
    \includegraphics[scale=0.25]{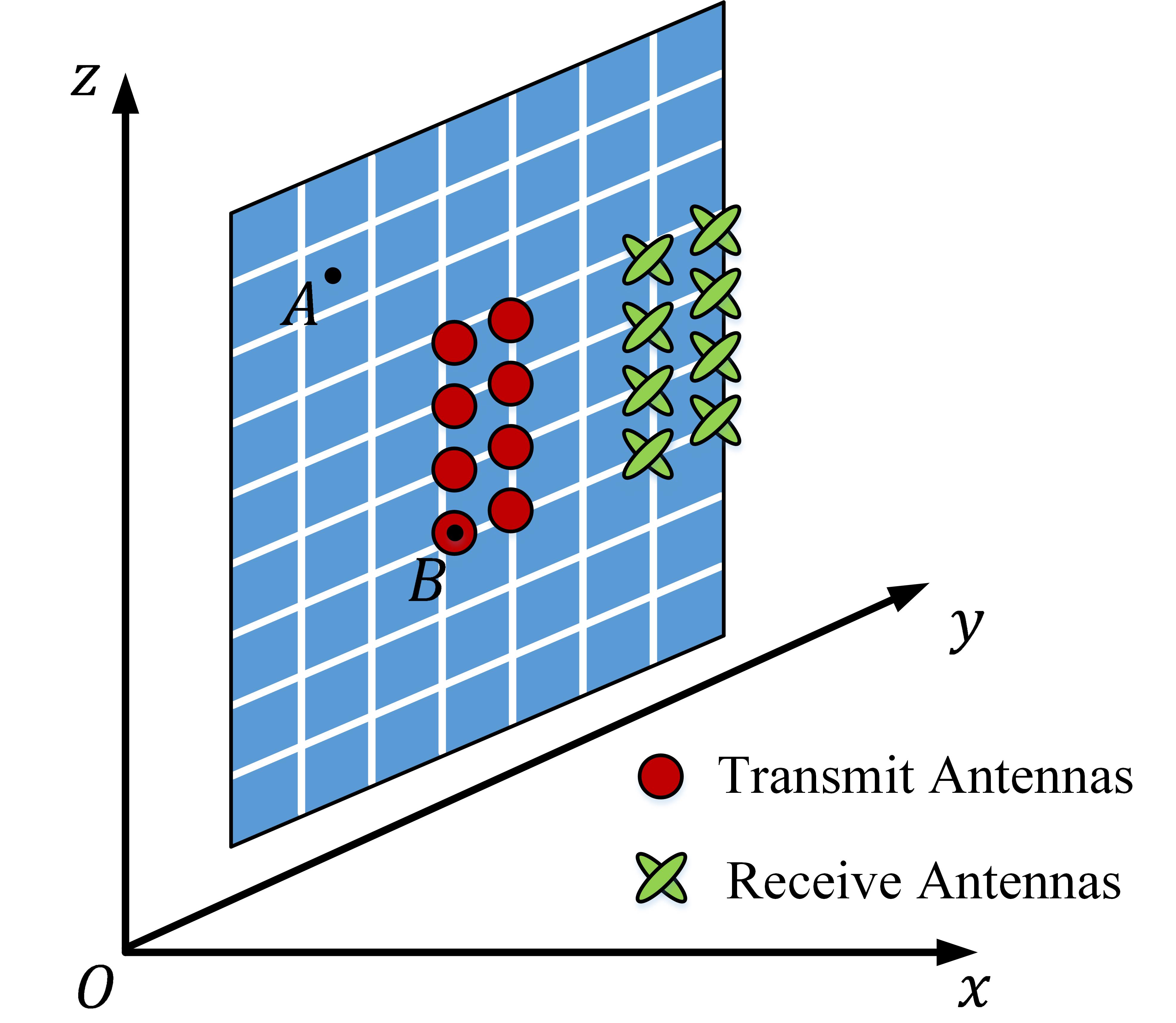}}
    \label{fig:side:b}
  \end{minipage}
  \caption{The placement of the antenna array and the RIS}
  \label{fig.BSstructure}
\end{figure}

\begin{figure*}[htb]
  \begin{minipage}[htb]{0.3\linewidth}
    \centering
    \subfigure[$4\times 4$ RIS]{
    \psfig{figure=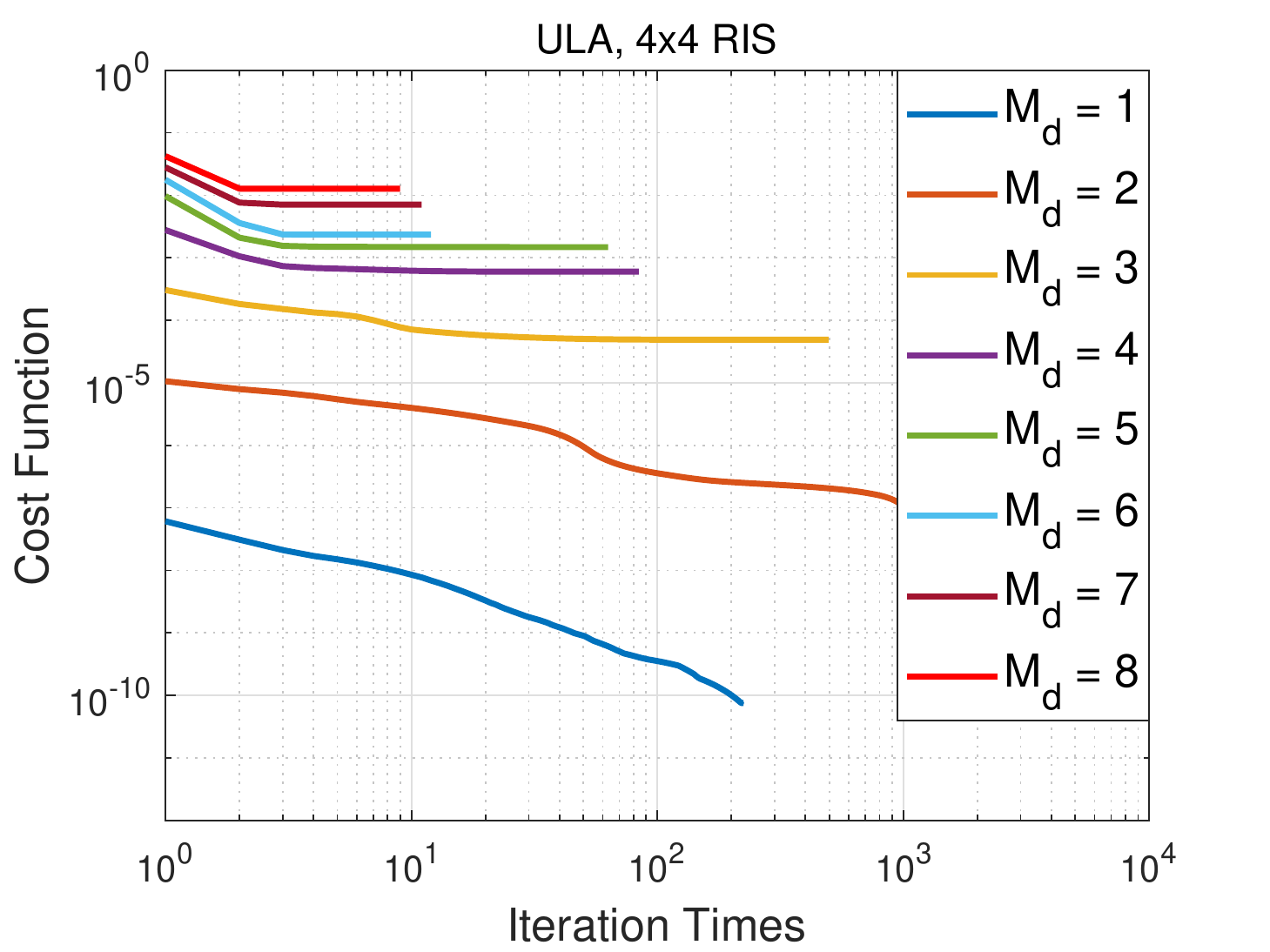,width= 2.3in}}
    \label{fig.ULAa}
  \end{minipage}%
  \begin{minipage}[htb]{0.3\linewidth}
    \centering
    \subfigure[$8\times 8$ RIS]{
    \psfig{figure=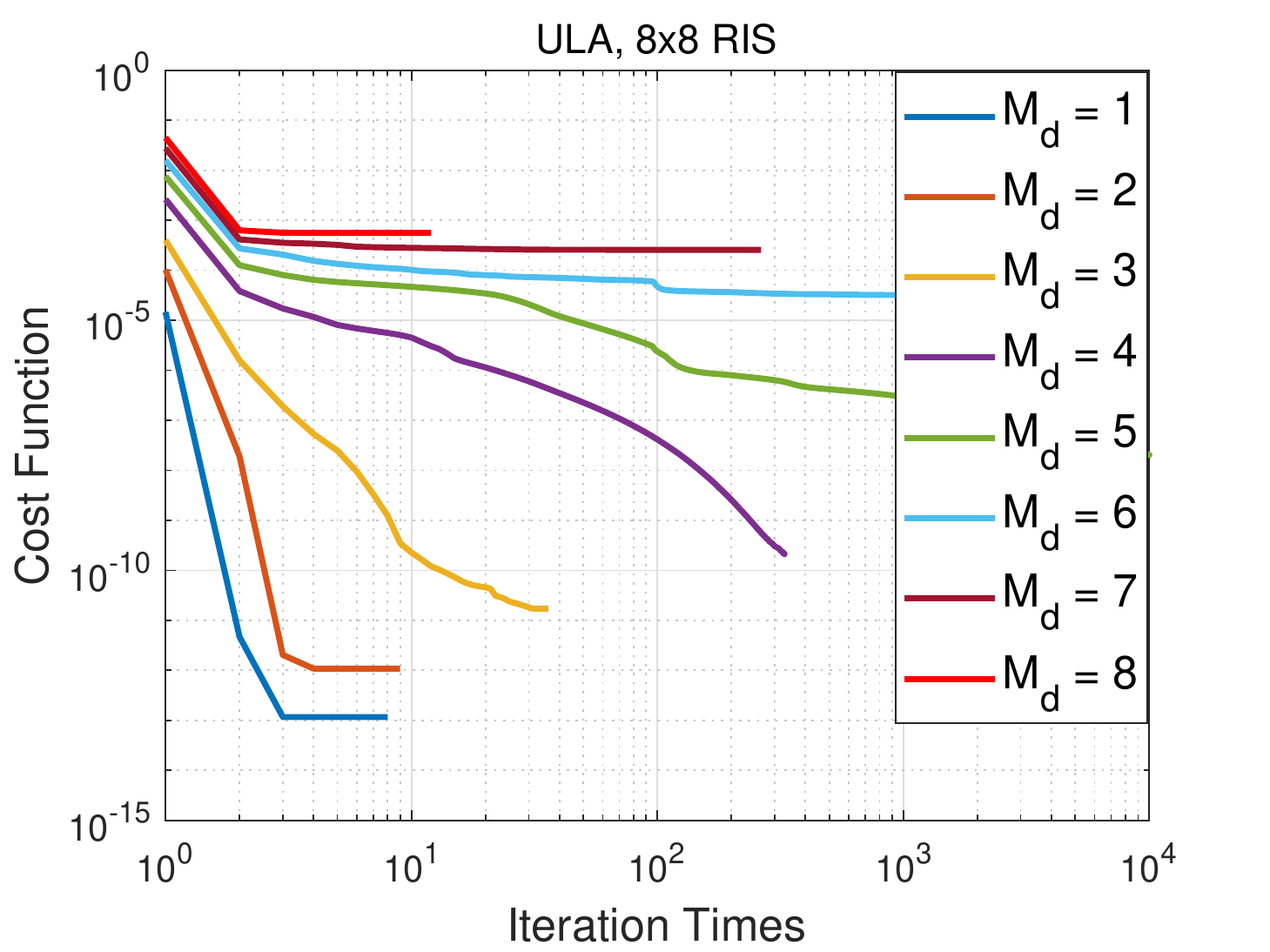,width= 2.3in}}
    \label{fig.ULAb}
  \end{minipage}
  \begin{minipage}[htb]{0.3\linewidth}
    \centering
    \subfigure[$16\times 16$ RIS]{
    \psfig{figure=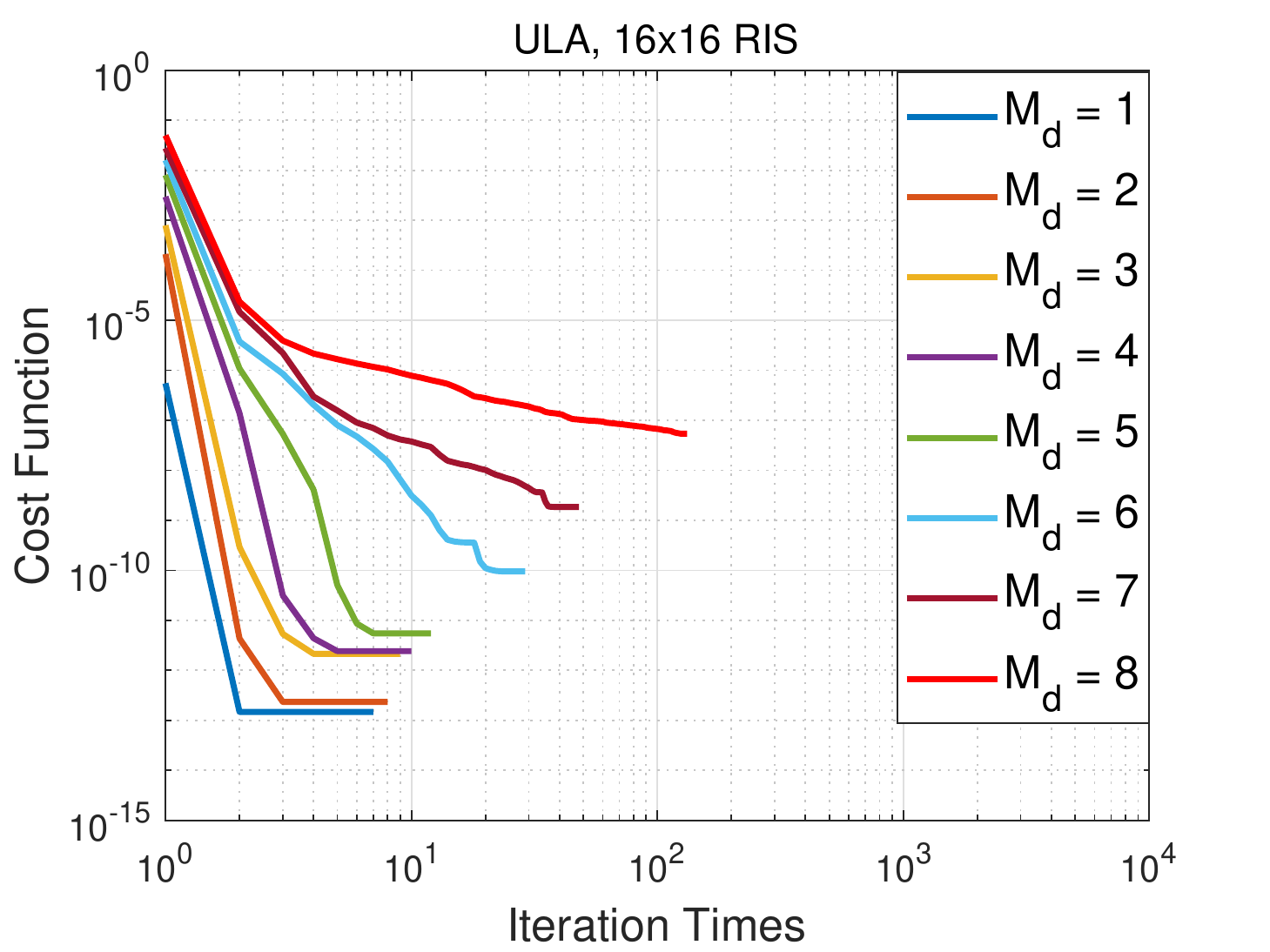,width= 2.3in}}
    \label{fig.ULAc}
  \end{minipage}
  \caption{Convergence of cost function (\ref{costFuncv4}) when running Algorithm \ref{Algo.2} for ULA shown in Fig. \ref{fig.BSstructure}(a).}
  \label{fig.ULA}
\end{figure*}
\begin{figure*}[htb]
  \begin{minipage}[htb]{0.3\linewidth}
    \centering
    \subfigure[$4\times 4$ RIS]{
    \psfig{figure=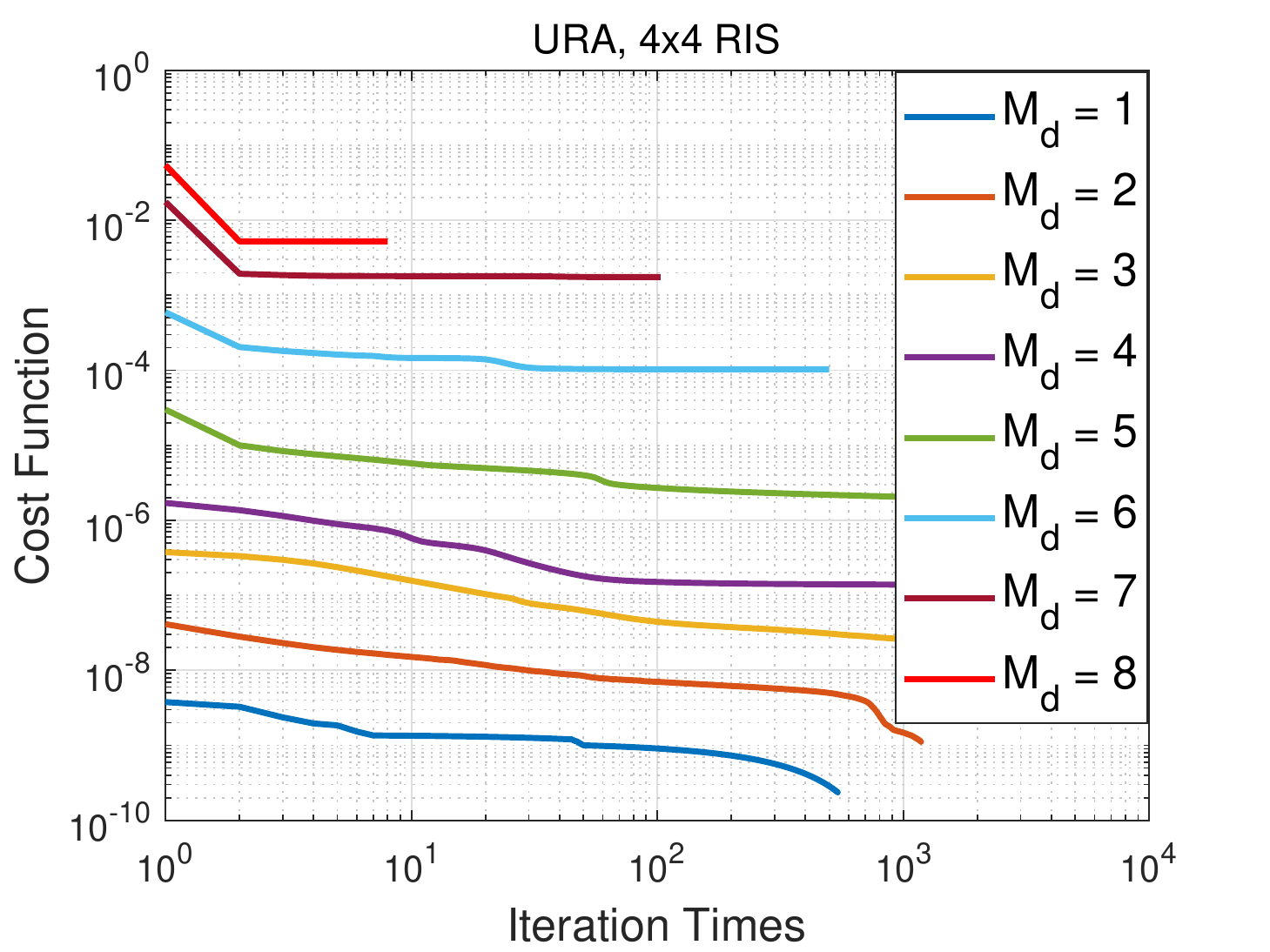,width= 2.3in}}
    \label{fig.URAa}
  \end{minipage}%
  \begin{minipage}[htb]{0.3\linewidth}
    \centering
    \subfigure[$8\times 8$ RIS]{
    \psfig{figure=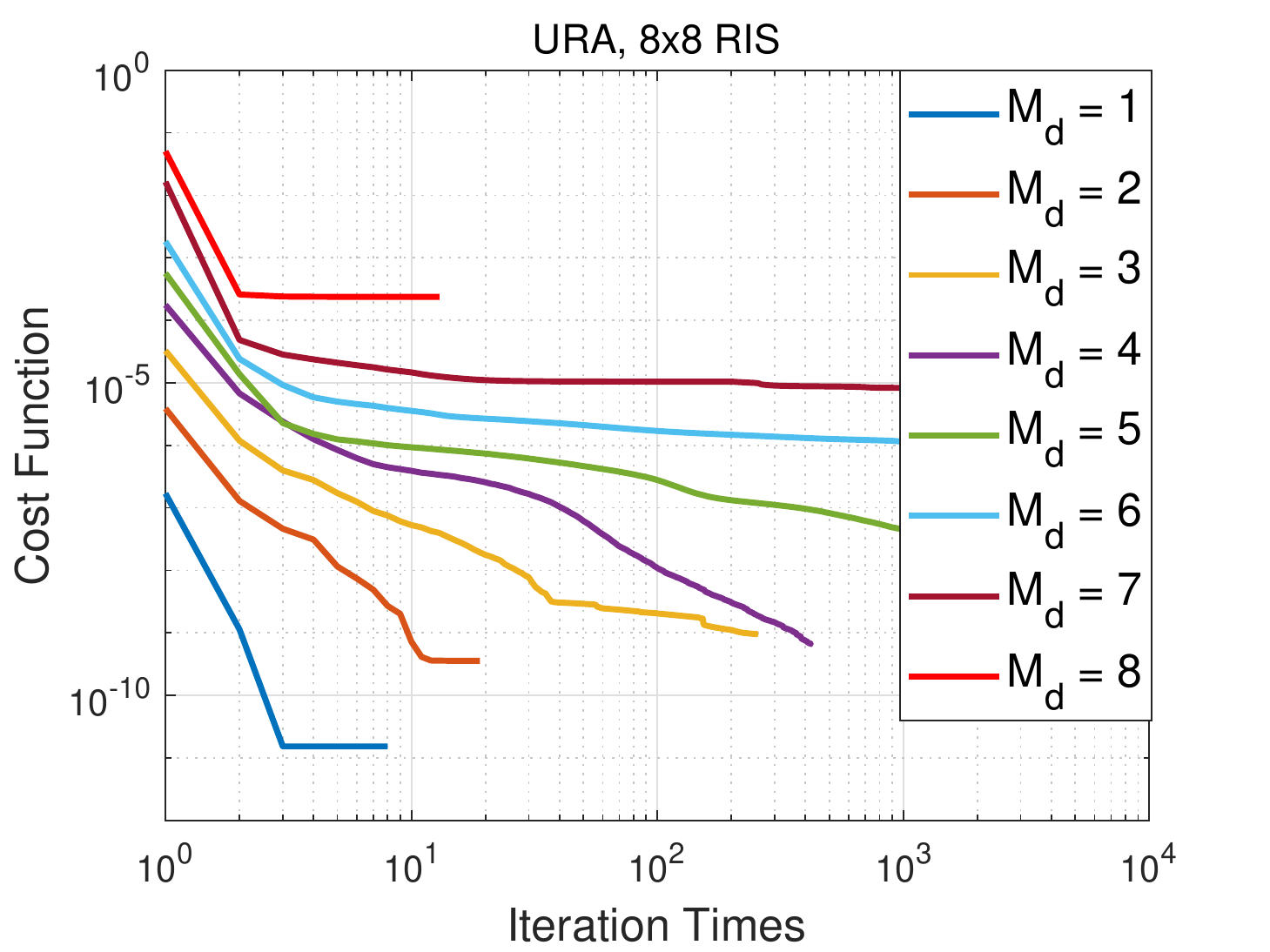,width= 2.3in}}
    \label{fig.URAb}
  \end{minipage}
  \begin{minipage}[htb]{0.3\linewidth}
    \centering
    \subfigure[$16\times 16$ RIS]{
    \psfig{figure=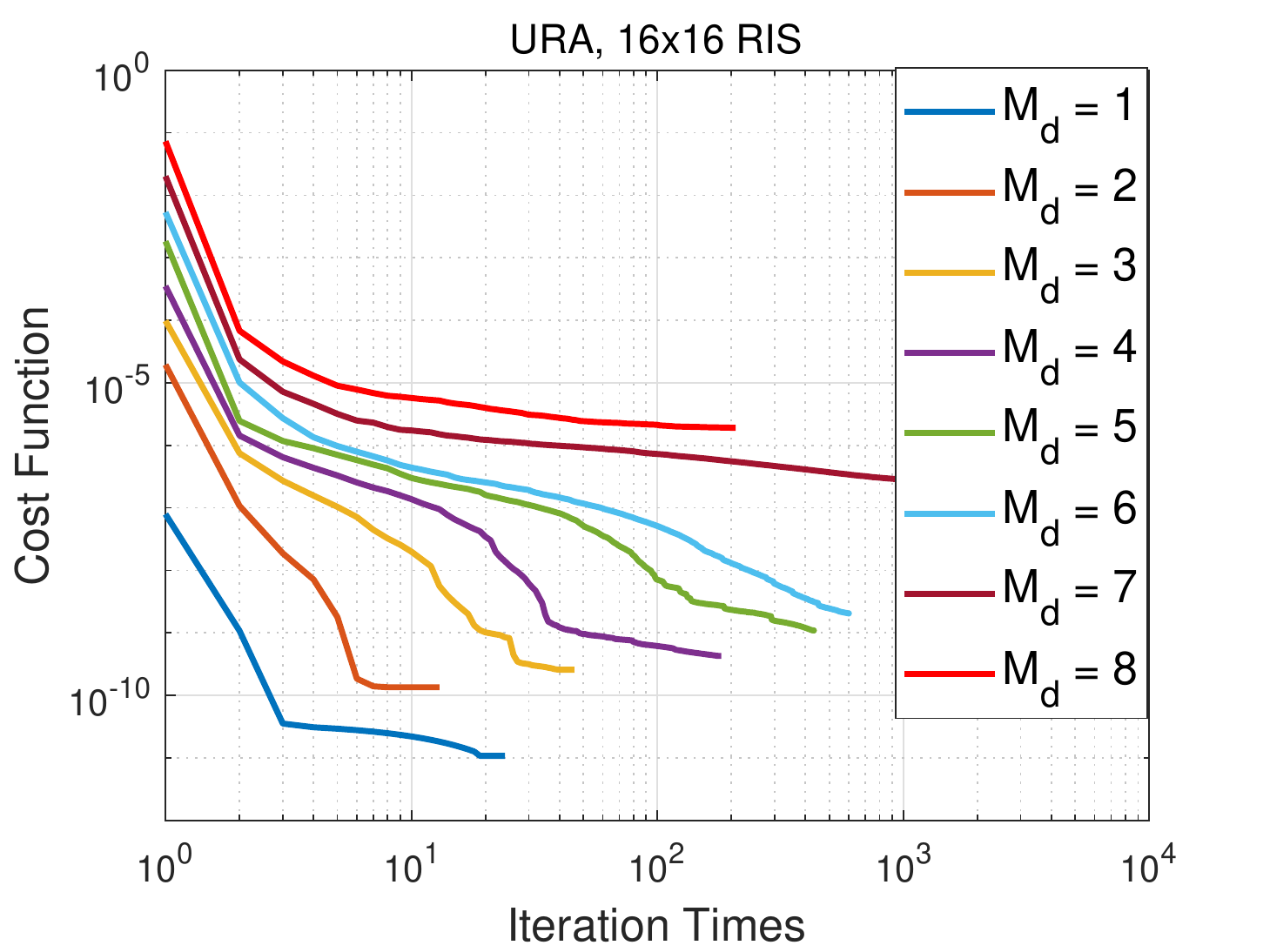,width= 2.3in}}
    \label{fig.URAc}
  \end{minipage}
  \caption{Convergence of cost function (\ref{costFuncv4}) when running Algorithm \ref{Algo.2} for URA shown in Fig. \ref{fig.BSstructure}(b).}
  \label{fig.URA}
\end{figure*}

For both the antenna arrays and the RIS, the inter-element spacing is $\frac{\lambda}{2}$, where $\lambda$ is the wavelength.
The antennas are $\frac{\lambda}{2}$ away from the plane of the RIS, and the distance between the center of the transmit and receive antenna arrays is $3\lambda$.
For wave number $k_{\lambda}=\frac{2\pi}{\lambda}$, the LOS channel between two adjacent points $A(x_1,y_1,z_1)$ and $B(x_2,y_2,z_2)$ is simulated as \cite{1552223}
\ben
h_{AB} = \sqrt{\beta_{AB}}e^{-jk_{\lambda}d_{AB}},
\label{equ.nearFieldChan}
\een
where
\ben
\beta_{AB} = \frac{G_l}{4}\left(\frac{1}{(k_{\lambda}d)^2}-\frac{1}{(k_{\lambda}d)^4}+\frac{1}{(k_{\lambda}d)^6}\right),
\een
and $d_{AB} = \sqrt{(x_1-x_2)^2+(y_1-y_2)^2+(z_1-z_2)^2}$.
Thus, the entries of $\Hbf_{B_rR}$, $\Hbf_{B_rB_t}$, and $\Hbf_{RB_t}$ are obtained from (\ref{equ.nearFieldChan}), and the entries of $\Hbf_{Ru}$, $\Hbf_{B_ru}$, $\Hbf_{dR}$, and $\Hbf_{dB_t}$ are of complex Gaussian with zero mean and unit variance.

In the following simulations, $\beta_{u,k},k=1,2,
\dots, K_u$ and $\beta_{d,k},k=1,2,\dots,K_d$ are generated according to the free-space path loss model
\ben
\beta_{u,k} = \left[\frac{\sqrt{G_l}\lambda}{4\pi d_{u,k}}\right]^2,\beta_{d,k} = \left[\frac{\sqrt{G_l}\lambda}{4\pi d_{d,k}}\right]^2
\een
where $d_{u,k}$ and $d_{d,k}$ are the distances from the BS to $k$th UL user and $k$ DL user, respectively. The maximum transmit power of BS is  $P_t=30$dBm, and the transmit power of UL users are $\sigma_{s_u}^2 = 10$dBm. We assume that $K_u = K_d = 3$ with $d_{u,k} = 100{\rm m}, k = 1,2,3$ and $d_{d,k}=500{\rm m}, k = 1,2,3$. The wavelength $\lambda = 0.125$m given the $2.4$GHz carrier frequency, and $G_l = 1$ as the BS are equipped with omnidirectional antennas. The noise power at the BS and users are set to be $\sigma_B^2=\sigma^2 = -95$dBm.
For the half-duplex mode, $\sigma_B^2=\sigma^2 = -98$dBm as the frequency bandwidth is evenly divided between the UL and DL transmission.

\begin{figure}[htb]
\centering
{\psfig{figure=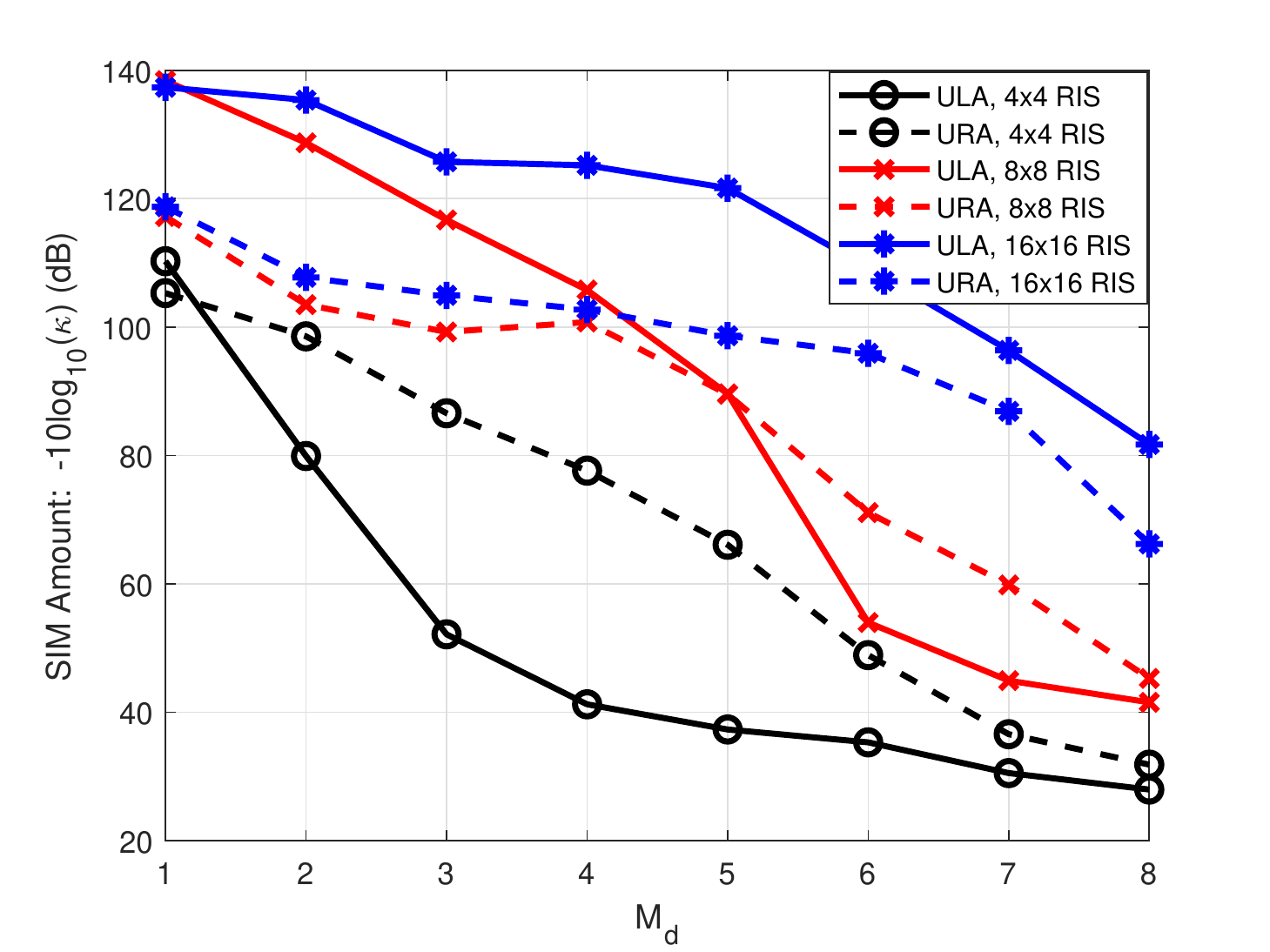,width= 3.5in}}
\caption{The SIM metric $\kappa$ vs. the number of DL effective antennas $M_d$.}
\label{fig.Pfance}
\end{figure}

\begin{figure}[htb]
\centering
{\psfig{figure=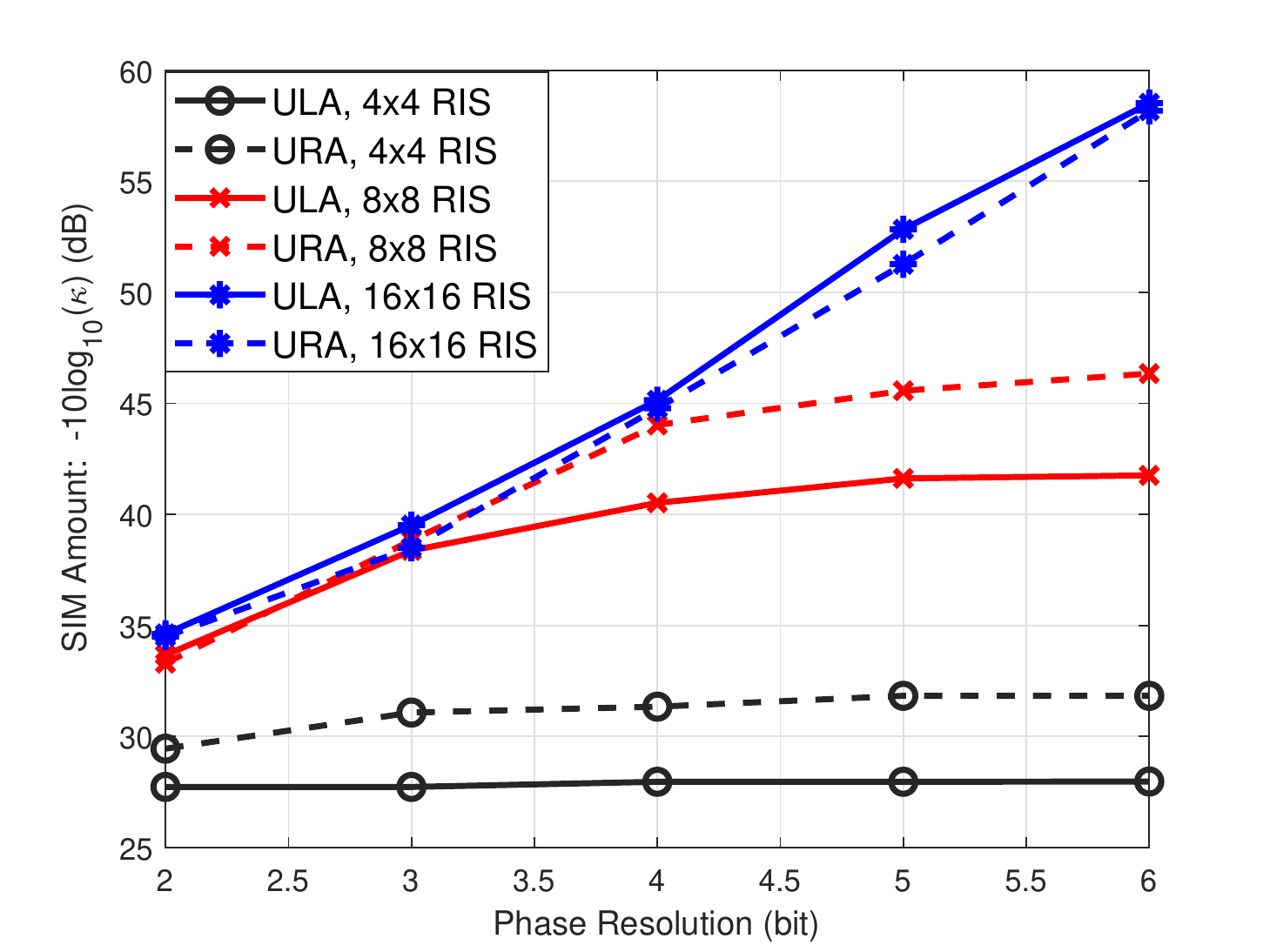,width= 3.5in}}
\caption{The SIM metric $\kappa$ vs. the phase resolution $b$.}
\label{fig.simvsb}
\end{figure}
\begin{figure*}[htb]
  \begin{minipage}[htb]{0.3\linewidth}
    \centering
    \subfigure[UL rates]{
    \psfig{figure=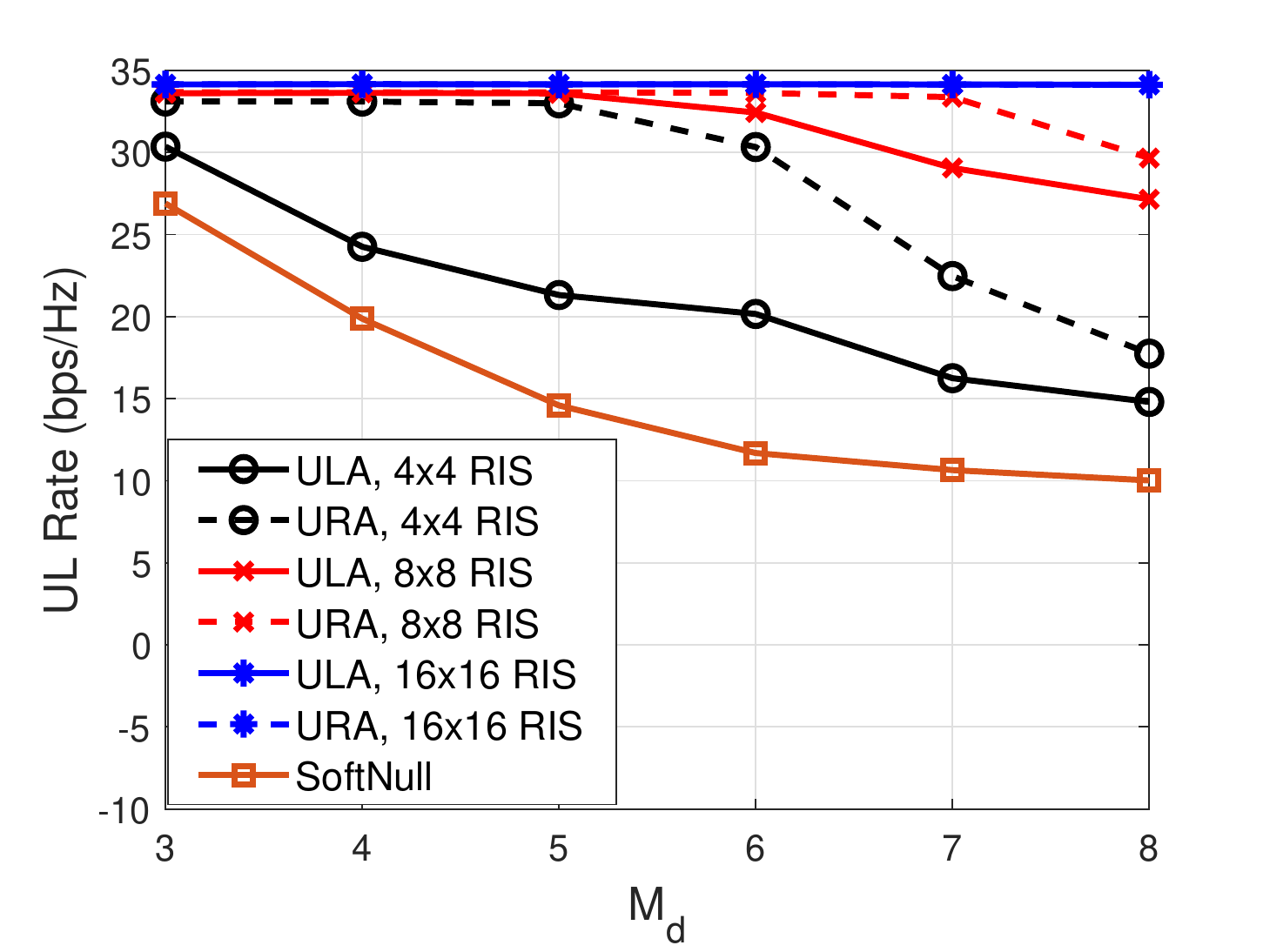,width= 2.3in}}
    \label{fig.ULRateMd}
  \end{minipage}%
  \begin{minipage}[htb]{0.3\linewidth}
    \centering
    \subfigure[DL rates]{
    \psfig{figure=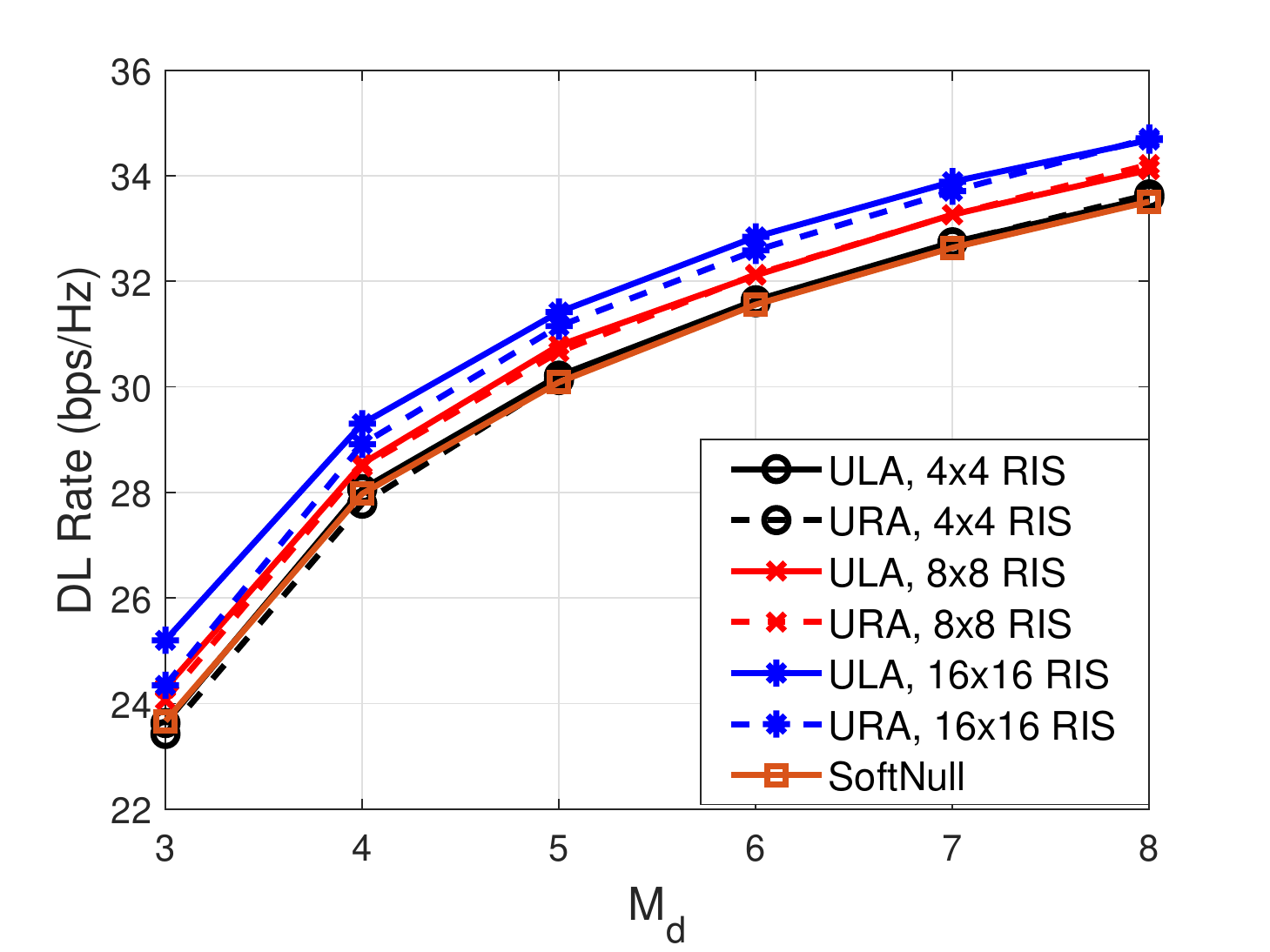,width= 2.3in}}
    \label{fig.DLRateMd}
  \end{minipage}
  \begin{minipage}[htb]{0.3\linewidth}
    \centering
    \subfigure[Sum-rates]{
    \psfig{figure=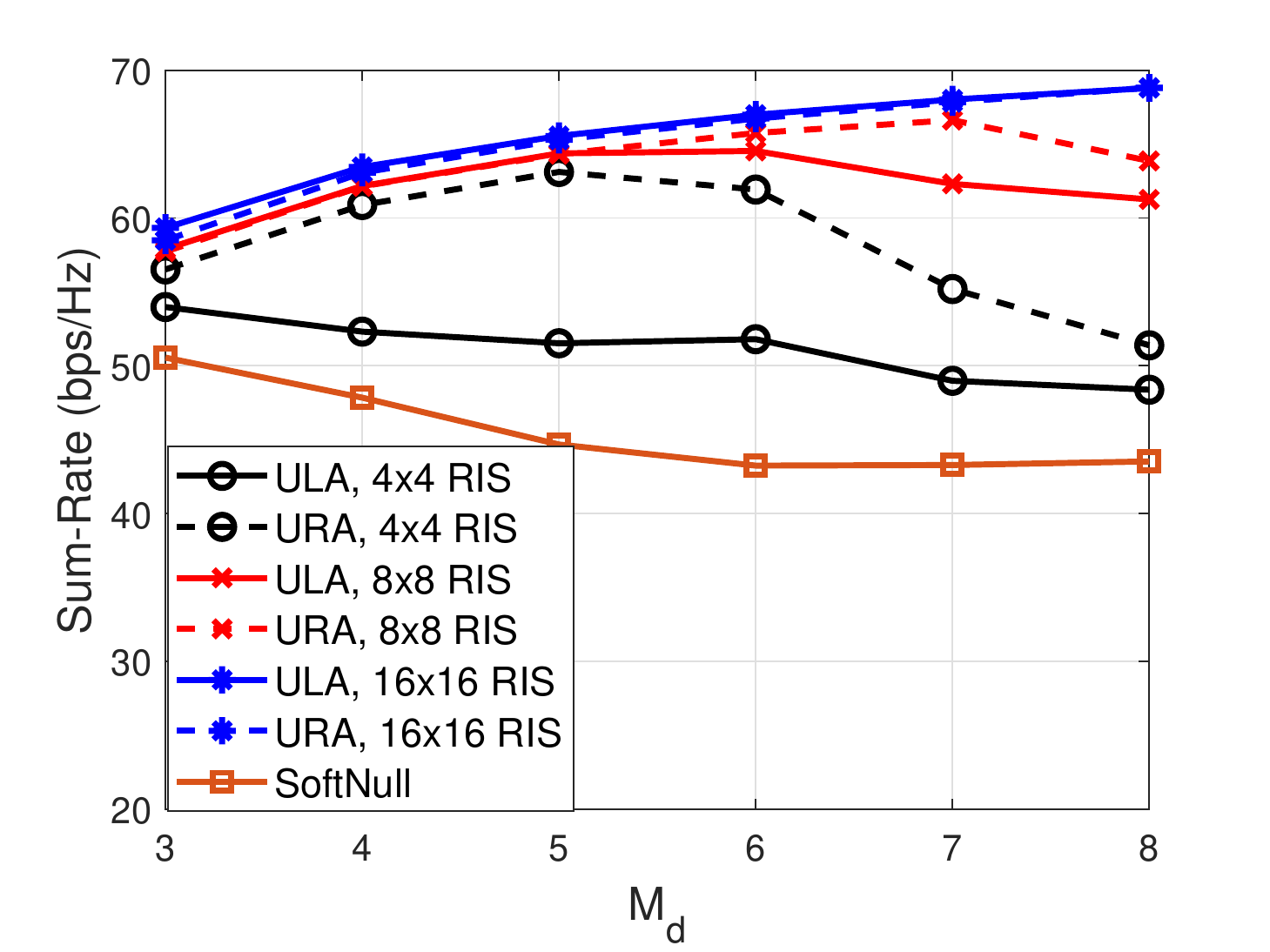,width= 2.3in}}
    \label{fig.SumRateMd}
  \end{minipage}
  \caption{The UL rate, the DL rate, and the sum-rate vs. the number of DL effective antennas $M_d$.}
  \label{fig.RateMd}
\end{figure*}
\begin{figure*}[htb]
  \begin{minipage}[htb]{0.3\linewidth}
    \centering
    \subfigure[UL rates]{
    \psfig{figure=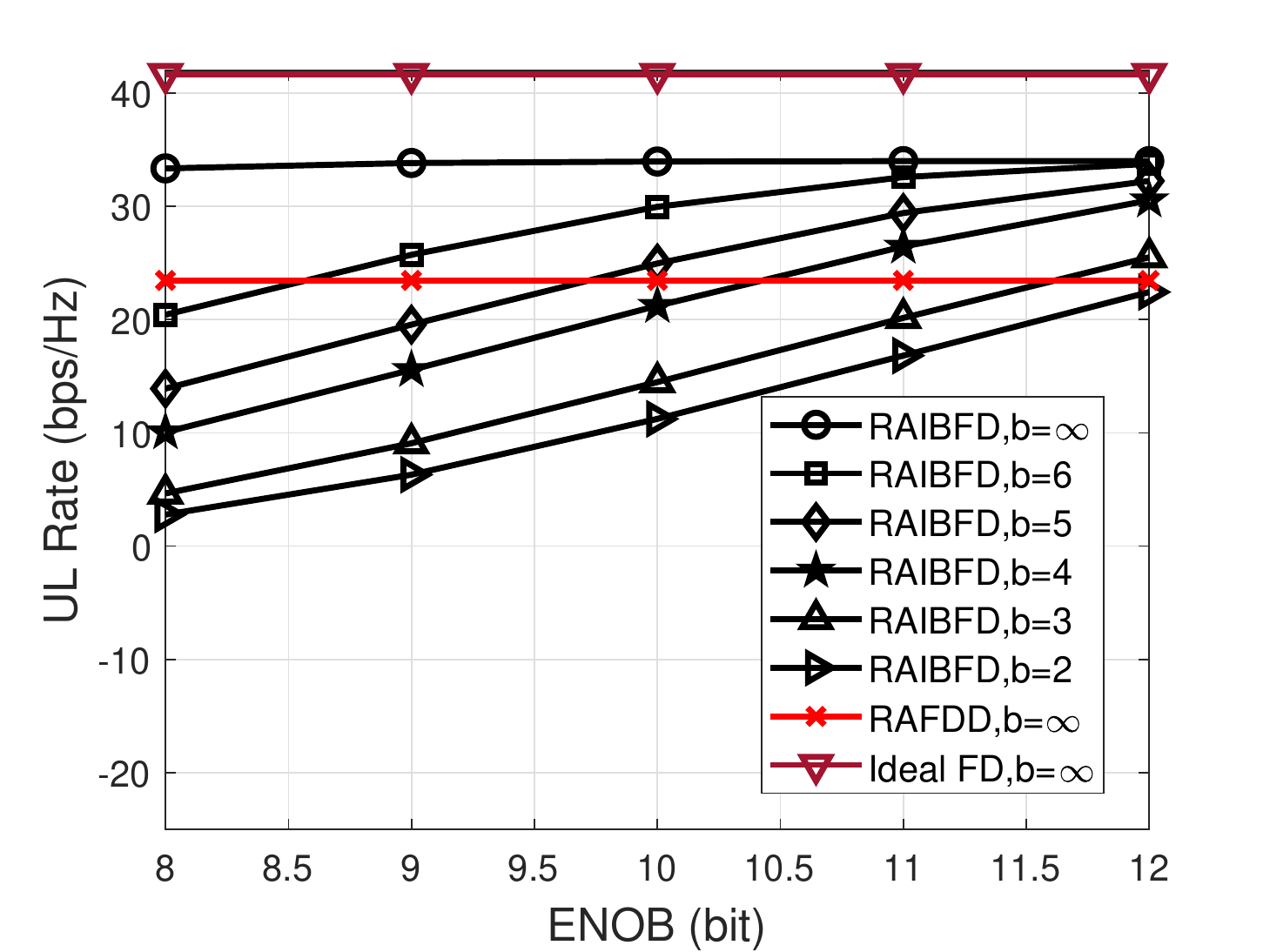,width= 2.3in}}
    \label{fig.ULRateADC}
  \end{minipage}%
  \begin{minipage}[htb]{0.3\linewidth}
    \centering
    \subfigure[DL rates]{
    \psfig{figure=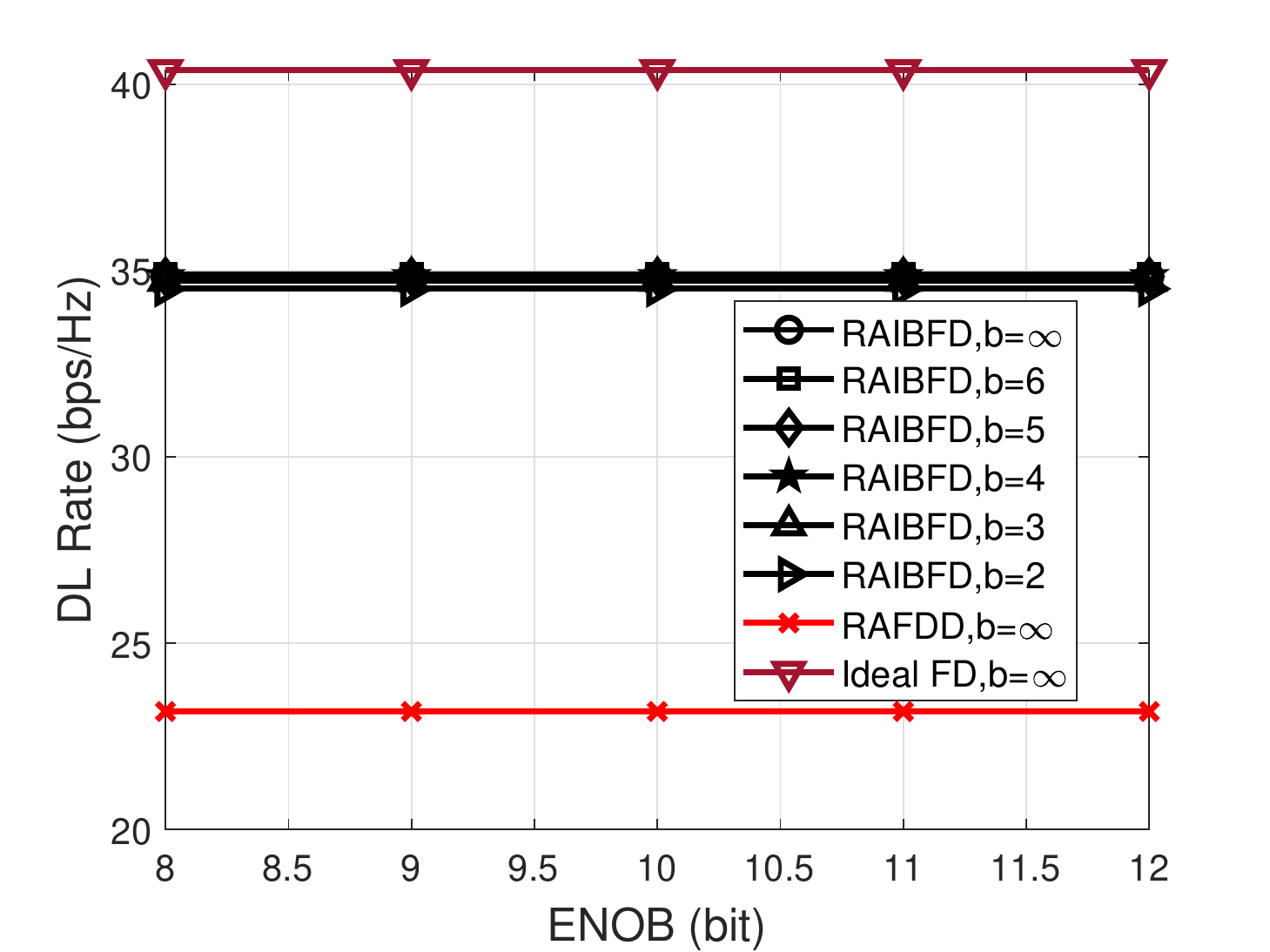,width= 2.3in}}
    \label{fig.DLRateADC}
  \end{minipage}
  \begin{minipage}[htb]{0.3\linewidth}
    \centering
    \subfigure[Sum-rates]{
    \psfig{figure=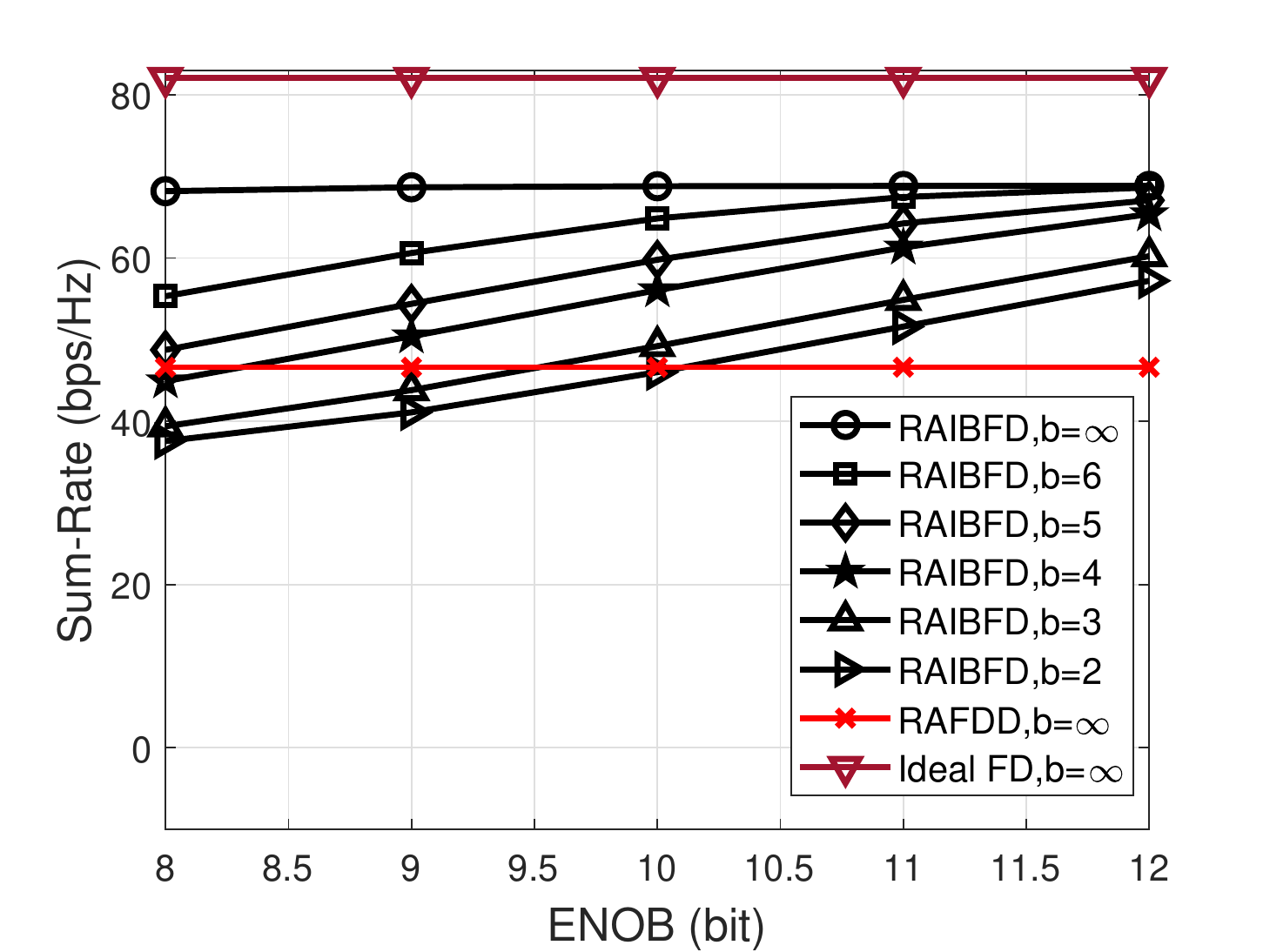,width= 2.3in}}
    \label{fig.SumRateADC}
  \end{minipage}
  \caption{The UL rate, the DL rate, and the sum-rate vs. the ${\sf ENOB}$ of the ADC.}
  \label{fig.RateADC}
\end{figure*}
In the first simulation, we study the convergence performance of the cost function (\ref{costFuncv4}) when running Algorithm \ref{Algo.1} for both ULA and URA scenarios where a $4 \times 4$ RIS, a $8 \times 8$ RIS, and a $16 \times 16$ RIS of $\infty$-bit resolution are deployed, respectively. Recall that $M_d$ is a design parameter to optimize. As shown in Fig. \ref{fig.ULA} and Fig. \ref{fig.URA},  as $M_d$ increases from $1$ to $8$ it becomes harder for cost function (\ref{costFuncv4}) to approach $0$. But employing more RIS elements can improve both the rate and the convergence. Using the SIM matrices $\Pbf_{\rm SIM}$'s and RIS response matrices $\Dbf$'s from the results of above convergence process and calculating the average SIM metric over receive antennas as
\ben
\kappa = \frac{1}{M_r}||(\Hbf_{B_rR}\Dbf\Hbf_{RB_t}+\Hbf_{B_rB_t})\Pbf_{\rm SIM}||_F^2,
\een
Fig. \ref{fig.Pfance} shows that both ULA and URA assisted by a RIS can suppress the self-interference effectively, where URA shows better SIM performance given a $4\times 4$ RIS and ULA  provide more SIM  than URA with a $16\times 16$. The ULA, assisted by a $16\times 16$ RIS, can suppress over $120$dB self-interference, which shows the promising potential of the proposed RIS-assisted SIM scheme. The non-negligible difference between the ULA and the URA suggests that the placement of the antennas can be optimized for meaningful improvement of the system performance, which can be an interesting topic for future research.
We study the SIM performance as the phase resolution varies from $2$ to $6$ in Fig. \ref{fig.simvsb}, where $M_d$ is set to be $8$. We can see that the $4\times 4$ RIS is not sensitive to the phase resolution, but higher phase resolution leads to larger SIM amount with RIS size of $8\times 8$ and $16\times 16$.

Fig. \ref{fig.RateMd} compares the UL rate, DL rate, and sum-rate performance of the proposed RIS-assisted in-band full-duplex (RAIBFD) versus SoftNull method proposed in \cite{SoftNull} as the number of DL effective antennas $M_d$ varies from $3$ to $8$. Setting the ADC bit resolution to be $b = 12$bit, we consider that a $4\times4$ RIS, a $8\times8$ RIS, and a $16\times16$ RIS of $\infty$-bit resolution are deployed for both ULA and URA scenarios. For $4\times 4$ RIS and $8\times 8$ RIS, although the increasing $M_d$ can improve the DL rate, it can cause more self-interference to the receive antennas and reduces the UL rate, thus, we see a trade-off between the UL and DL rate with respect to number of DL effective antennas $M_d$. But for $16 \times 16$ RIS, the proposed RAIBFD can completely eliminate the self-interference and achieve the best sum-rate performance for $M_d = 8$; in contrast, the sum-rate performance of the SoftNull method decreases as $M_d$ increases due to its limited capability of mitigating the self-interference.

The third example shows the UL rate, DL rate, and sum-rate performance of the proposed RAIBFD, the RIS-assisted FDD (RAFDD) proposed in \cite{RISFDD}, and the ideal full-duplex as the ENOB of ADC varies from $8$ to $12$ in the ULA scenario. For the RAFDD, UL carrier frequency and DL frequency are $f_u = 1760$MHz and $f_d = 1855$MHz. As shown in Fig. \ref{fig.RateADC}, the simulation of the proposed RAIBFD is conducted with the $16\times 16$ RIS with different phase resolution: $2$-bit, $3$-bit, ..., $6$-bit, and $\infty$-bit. Given the $16 \times 16$ RIS of $\infty$-bit resolution and $12$-bit ADCs, Fig. \ref{fig.RateADC}(c)  shows that proposed RAIBFD with RIS of $\infty$-resolution can achieve $83\%$ of sum-rate of the ideal full-duplex and has $48\%$ gain over the RAFDD. The sum-rate performance of the RAIBFD does degenerate for the RIS with finite bit resolution, but even with the $2$-bit $16\times 16$ RIS, the RAIBFD can still enjoy $23\%$ gain over the RAFDD as shown in Fig. \ref{fig.RateADC}(c) at ENOB $12$.

\begin{figure}[htb]
\centering
{\psfig{figure=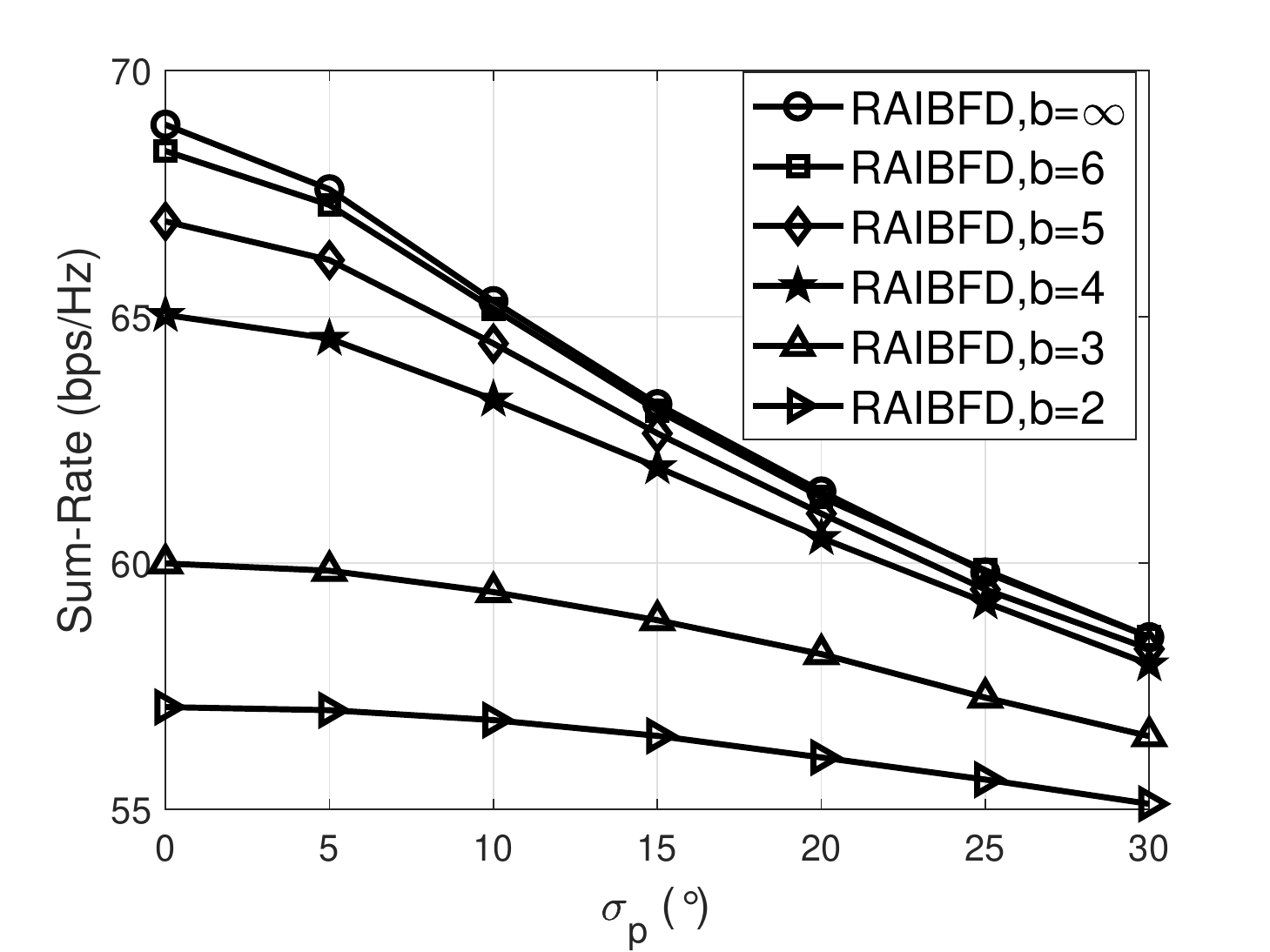,width= 3.5in}}
\caption{The sum-rate vs. $\sigma_p$ under different bit resolutions of a $16\times 16$ RIS.}
\label{fig.PhaseError}
\end{figure}
In the last simulation, we simulate the sum-rate performance of the proposed RAIBFD when the reflection coefficients of RIS suffer from phase deviations for manufacture imperfection. We consider a $16\times 16$ RIS deployed ULA with $M_d = 8$, and the ENOB of ADC is ${\sf ENOB} = 12$bit.
Assume that the phase deviations are uniformly distributed between $[-\sigma_p,\sigma_p]$.  Fig. \ref{fig.PhaseError} shows that the sum-rate performance of the proposed RAIBFD with RIS of $\infty$-bit and bit resolution ranging from $2$-bit to $6$-bit as $\sigma_P$ varies from $0^\circ$ to $30^\circ$.
The simulation shows $2$-bit and $3$-bit RIS shows robustness to the phase deviations, but
phase deviations uniformly distributed between $[-30^\circ,30^\circ]$ cause $15\%$ sum-rate loss to the proposed RAIBFD systems equipped with $6$-bit $16\times 16$ RIS, which suggests that calibrating the RIS reflection coefficients can be an interesting research topic in the future.

\section{Conclusion} \label{SEC6}
In this paper, we considered a in-band full-duplex wireless system where the base station (BS) aided by a RIS transmits signal to and receives signal from multiple users on the same frequency band simultaneously. Taking into account the  quantization noise of the analog-to-digital converters (ADCs), we propose to jointly design the DL precoding matrix and the RIS coefficients to mitigate the self-interference and further maximize the sum of UL and DL rates. The simulation results show the effectiveness of the proposed RAIBFD wireless system as its sum-rate outperforms that of the RIS-assisted half-duplex and the state-of-the-arts in-band full-duplex method.
\section*{Appendix: Derivation of (\ref{equ.propgderd})} \label{appendix}
First, we calculate $\nabla R_u(\dbf)$. Rewriting $\Hbf_u$ from (\ref{equ.Hu}) as
\ben
\Hbf_u =\Hbf_{B_rR}\Dbf\Hbf_{Ru}\Gammabf_u+\Hbf_{B_ru}\Gammabf_u,
\label{equ.Huv2}
\een
we have
\ben
{\partial \Hbf_u} = \Hbf_{B_rR}({\partial \Dbf})\Hbf_{Ru}\Gammabf_u,
\label{equ.HuDer}
\een
and
\ben
{\partial \Hbf_u^H} =\Gammabf_u^H{\Hbf}_{Ru}^H({\partial \Dbf^H})\Hbf_{B_rR}^H.
\label{equ.HuDerConjTrans}
\een
Owing to the formula $\partial({\rm log}_2({\rm det}(\Xbf)))=\frac{1}{{\rm ln}2}\tr(\Xbf^{-1}{\partial \Xbf})$, we can obtain from (\ref{equ.FduLSE}) that
\ben
{\partial R_u} = \frac{1}{{\rm ln}2}\frac{\sigma_{u}^2}{\sigma_B^2}\tr\left(\Fbf^{-1}_u\left(({\partial \Hbf_u})\Hbf_u^H+\Hbf_u({\partial \Hbf_u^H})\right)\right),
\label{equ.RuFddDer}
\een
where $\Fbf_u = \Ibf+\frac{\sigma_{u}^2}{\sigma_B^2}\Hbf_u\Hbf_u^H$. Inserting (\ref{equ.HuDer}) and (\ref{equ.HuDerConjTrans}) into (\ref{equ.RuFddDer}) yields
\ben
{\partial R_u} = \frac{1}{{\rm ln}2}\frac{\sigma_{u}^2}{\sigma_B^2}\tr\left(\Jbf_u(\partial \Dbf)+\Jbf_u^H(\partial \Dbf^H)\right),
\label{equ.RuderD}
\een
where
\ben
\Jbf_u = \Hbf_{Ru}\Gammabf_u\Hbf_u^H\Fbf^{-1}_u\Hbf_{B_rR}.
\label{equ.Ju}
\een
According to (\ref{equ.RuderD}), we can further obtain
\ben
\nabla R_u(\dbf) = \frac{1}{{\rm ln}2}\frac{\sigma_{u}^2}{\sigma_B^2}\diag\left(\Jbf_u^H\right).
\label{equ.RuDerPhi}
\een

Second, we calculate $\nabla R_d(\dbf)$. Inserting (\ref{equ.gammai}) into (\ref{equ.HdDLv2}), we have
\ben
\begin{split}
R_d =& K_d{\rm log}_2\left(\frac{P_t}{\sigma^2}+\tr\left((\Hbf_{d}\Hbf_{d}^H)^{-1}\right)\right) \\
&-\sum_{k=1}^{K_d}{\rm log}_2\left(K_d\tr\left(\Ebf_{k}(\Hbf_{d}\Hbf_{d}^H)^{-1}\right)\right),
\end{split}
\een
from which we can further obtain
\ben
\begin{split}
{\partial R_d} =& \frac{K_d}{{\rm ln}2}\frac{\tr\left({\partial(\Hbf_{d}\Hbf_{d}^H)^{-1}}\right)}{\frac{P_t}{\sigma^2}+\sum_{k=1}^{K_d}\gamma_{k}} \\
&-\frac{1}{{\rm ln}2}\sum_{k=1}^{K_d}\frac{\tr\left(\Ebf_{k}(\partial(\Hbf_{d}\Hbf_{d}^H)^{-1})\right)}{\gamma_{k}}.
\end{split}
\label{equ.RdFDDDer}
\een
As $\Hbf_d$ can be rewritten as
\ben
\Hbf_d = \Gammabf_d\Hbf_{dR}\Dbf\Hbf_{RB_t}+\Gammabf_d\Hbf_{dB_t},
\een
we have
\ben
{\partial \Hbf_d} = \Gammabf_d\Hbf_{dR}({\partial \Dbf})\Hbf_{RB_t},
\label{equ.HdDer}
\een
and
\ben
{\partial \Hbf_d^H} = \Hbf_{RB_t}^H({\partial \Dbf^H}){\Hbf}_{dR}^H\Gammabf_d^H.
\label{equ.HdDerTransp}
\een
Using the formula $\partial\left(\Xbf^{-1}\right) = -\Xbf^{-1}(\partial \Xbf)\Xbf^{-1}$ and denoting $\tilde{\Hbf}_d = (\Hbf_d\Hbf_d^H)^{-1}$, we can obtain
\ben
\begin{split}
{\partial \tilde{\Hbf}_d} =& -\tilde{\Hbf}_d\left({\partial (\Hbf_d\Hbf_d^H)}\right)\tilde{\Hbf}_d, \\
=&-\tilde{\Hbf}_d\Gammabf_d\Hbf_{dR}({\partial \Dbf})\Hbf_{RB_t}\Hbf_d^H\tilde{\Hbf}_d \\
&-\tilde{\Hbf}_d\Hbf_d\Hbf_{RB_t}^H({\partial \Dbf^H}){\Hbf}_{dR}^H\Gammabf_d^H\tilde{\Hbf}_d,
\end{split}
\label{equ.HdHdTranspInvDer}
\een
which leads (\ref{equ.RdFDDDer}) to be
\ben
\begin{split}
{\partial R_d} =& -\frac{K_d}{{\rm ln}2}\frac{\tr\left(\Fbf_d({\partial \Dbf})+\Fbf_d^H({\partial \Dbf^H})\right)}{\frac{P_t}{\sigma^2}+\sum_{k=1}^{K_d}\gamma_{k}} \\
&+ \frac{1}{{\rm ln}2}\sum_{k=1}^{K_d}\frac{\tr\left(\Jbf_{d,k}({\partial \Dbf})+\Jbf_{d,k}^H({\partial \Dbf^H})\right)}{\gamma_k},
\end{split}
\label{equ.RdDerD}
\een
with
\ben
\Fbf_d = \Hbf_{RB_t}\Hbf_d^H\tilde{\Hbf}_d\tilde{\Hbf}_d\Gammabf_d{\Hbf}_{dR},
\label{equ.Fd}
\een
and
\ben
\Jbf_{d,k} = \Hbf_{RB_t}\Hbf_d^H\tilde{\Hbf}_d\Ebf_k\tilde{\Hbf}_d\Gammabf_d{\Hbf}_{dR}.
\label{equ.Jdk}
\een
According to (\ref{equ.RdDerD}), we have
\ben
\nabla R_d(\dbf) = -\frac{K_d}{{\rm ln}2}\frac{\diag\left(\Fbf_d^H\right)}{\frac{P_t}{\sigma^2}+\sum_{k=1}^{K_d}\gamma_{k}} + \frac{1}{{\rm ln}2}\sum_{k=1}^{K_d}\frac{\diag\left(\Jbf_{d,k}^H\right)}{\gamma_k}.
\label{equ.RdFDDDerv2}
\een
From (\ref{equ.RuDerPhi}) and (\ref{equ.RdFDDDerv2}), we have $\nabla g(\dbf)$ as
\ben
\begin{split}
\nabla g(\dbf) =& -\left(\nabla R_u(\dbf) + \nabla R_d(\dbf)\right),\\
=& -\frac{1}{{\rm ln}2}\frac{\sigma_{u}^2}{\sigma_B^2}\diag\left(\Jbf_u^H\right)+\frac{K_d}{{\rm ln}2}\frac{\diag\left(\Fbf_d^H\right)}{\frac{P_t}{\sigma^2}+\sum_{k=1}^{K_d}\gamma_{k}} \\
&-\frac{1}{{\rm ln}2}\sum_{k=1}^{K_d}\frac{\diag\left(\Jbf_{d,k}^H\right)}{\gamma_k}.
\end{split}
\label{equ.sumRateDerPhi}
\een
\bibliographystyle{IEEEtran}
\bibliography{bib}
\end{document}